\def\@citecolor{blue}%
\def\@urlcolor{blue}%
\def\@linkcolor{blue}%
\def\orcidID#1{\smash{\href{http://orcid.org/#1}{\protect\raisebox{-1.25pt}{\protect\includegraphics{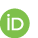}}}}}
\tikzstyle{dist}=[circle,  inner sep=1pt, fill]
\tikzstyle{point}=[circle,  inner sep=2pt, fill]
\tikzstyle{ms}=[draw,circle,text centered,minimum size=6mm,text width=3mm]
\tikzstyle{ps}=[draw,rectangle,text centered,minimum size=6mm,text width=3mm]
\newcolumntype{?}{!{\vrule width 1.5pt}}
\definecolor{color1}{RGB}{55,126,184} % lightblueish
\definecolor{color2}{RGB}{228,26,28} % red
\definecolor{color3}{RGB}{77,175,74} % green
\definecolor{color4}{RGB}{152,78,163} % purple
\definecolor{color5}{RGB}{255,127,0} % orange
\definecolor{color6}{rgb}{0.5, 1.0, 0.83} % aquamarine
\definecolor{color7}{rgb}{1.0, 0.0, 1.0} % magenta
\definecolor{color8}{rgb}{0.66, 0.66, 0.66} % gray
\definecolor{color8}{rgb}{0.66, 0.66, 0.66} % gray
\definecolor{commentcolor}{RGB}{60,114,26}
\newcolumntype{\expand}{}
\long\@namedef{NC@rewrite@\string\expand}{\expandafter\NC@find}
	\def\problem@arg{#1}%
	\def\problem@framed{framed}%
	\def\problem@lined{lined}%
	\def\problem@doublelined{doublelined}%
	\def\problem@hline{}%
	\def\problem@hline{\hline\hline}%
	\def\problem@hline{\hline}%
	\def\problem@tablelayout{|>{\bfseries}lX|c}%
	\def\problem@title{\multicolumn{2}{|l|}{%
			\raisebox{-\fboxsep}{\textsc{#2}}%
	}}%
	\def\problem@tablelayout{>{\bfseries}lXc}%
	\def\problem@title{\multicolumn{2}{l}{%
			\raisebox{-\fboxsep}{\textsc{#2}}%
	}}%
\declaretheorem[name=Assumption]{assumption}
\Crefname{figure}{Fig.}{Figs.}
\crefname{figure}{fig.}{figs.}
\Crefname{tabular}{Tab.}{Tabs.}
\crefname{tabular}{tab.}{tabs.}
\Crefname{section}{Sect.}{Sects.}
\crefname{section}{sect.}{sects.}
\crefname{restr}{restriction}{restrictions}
\Crefname{restr}{Restriction}{Restrictions}
\crefname{assumption}{assumption}{assumptions}
\Crefname{assumption}{Assumption}{Assumptions}
\crefname{algorithm}{algorithm}{algorithms}
\Crefname{algorithm}{Algorithm}{Algorithms}
\newcommand{\tool}[1]{\textsc{#1}\xspace}
\newcommand{\storm}{\tool{Storm}}
\newcommand{\multigain}{\tool{MultiGain}}
\newcommand{\prism}{\tool{PRISM}}
\newcommand{\prismgames}{\tool{PRISM-games}}
\newcommand{\benchmark}[1]{\textsf{#1}\xspace}
\newcommand{\eg}{e.g.,\xspace}
\newcommand{\ie}{i.e.,\xspace}
\newcommand{\tuple}[1]{\ensuremath{\left\langle #1 \right\rangle}}
\newcommand{\tupleaccess}[2]{{\ensuremath{#1\llbracket#2\rrbracket}}}
\newcommand{\set}[1]{\ensuremath{\left\{ #1 \right\}}}
\newcommand{\rr}{\ensuremath{\mathbb{R}}}
\newcommand{\rrnn}{\ensuremath{\mathbb{R}_{\ge0}}}
\newcommand{\rrext}{\ensuremath{\bar{\mathbb{R}}}}
\newcommand{\rrgz}{\ensuremath{\mathbb{R}_{>0}}}
\newcommand{\nn}{\ensuremath{\mathbb{N}}}
\newcommand{\ex}[1]{\ensuremath{\exists\,#1\colon\,}}
\newcommand{\fa}[1]{\ensuremath{\forall\,#1\colon\,}}
\newcommand{\dist}{\ensuremath{\mu}}
\newcommand{\dists}[1]{\ensuremath{\mathit{Dist(#1)}}}
\newcommand{\supp}[1]{\ensuremath{\mathit{supp(#1)}}}
\newcommand{\diff}{\mathop{}\!\mathrm{d}}
\newcommand\functionrestr[2]{{% we make the whole thing an ordinary symbol
  \left.\kern-\nulldelimiterspace % automatically resize the bar with \right
  #1 % the function
  \vphantom{\big|} % pretend it's a little taller at normal size
  \right|_{#2} % this is the delimiter
  }}
\newcommand{\ma}{\ensuremath{\mathcal{M}}}
\newcommand{\mdp}{\ensuremath{\mathcal{M}}}
\newcommand{\states}{\ensuremath{S}}
\newcommand{\actions}{\ensuremath{\mathit{Act}}}
\newcommand{\transitions}{\ensuremath{\Delta}}
\newcommand{\probabilities}{\ensuremath{\mathbf{P}}}
\newcommand{\matuple}{\ensuremath{ \tuple{\states, \actions, \transitions, \probabilities} }}
\newcommand{\sapair}{\ensuremath{\tuple{\state,\action} }}
\newcommand{\sa}[1][]{\ensuremath{ \mathit{SA}^{#1} }}
\newcommand{\ms}[1][]{\ensuremath{ \mathit{MS}^{#1} }}
\newcommand{\ps}[1][]{\ensuremath{ \mathit{PS}^{#1} }}
\newcommand{\state}{\ensuremath{s}}
\newcommand{\mstate}{\ensuremath{s}}
\newcommand{\pstate}{\ensuremath{\hat{s}}}
\newcommand{\psapair}{\ensuremath{\tuple{\pstate,\action} }}
\newcommand{\action}{\ensuremath{\alpha}}
\newcommand{\actbot}{\ensuremath{\tau}}
\newcommand{\actionordur}{\ensuremath{\kappa}}
\newcommand{\dur}{\ensuremath{t}}
\newcommand{\component}{\ensuremath{C}}
\newcommand{\mecs}[1]{\ensuremath{\mathit{MECS}(#1)}}
\newcommand{\zeromecs}[2]{\ensuremath{\mathit{MECS}_0(#1, \tuple{#2})}}
\newcommand{\componentelem}{\ensuremath{c}}
\newcommand{\componentset}{\ensuremath{\mathcal{C}}}
\newcommand{\statesof}[1]{\ensuremath{\mathit{states}(#1)}}
\newcommand{\exitsof}[1]{\ensuremath{\mathit{exits}(#1)}}
\newcommand{\submodel}[2]{\ensuremath{\tupleaccess{#1}{#2}}}
\newcommand{\quotient}[2]{\ensuremath{#1_{\setminus #2}}}
\newcommand{\staysign}{\ensuremath{\bot}}
\newcommand{\sinit}{\ensuremath{s_{\mathit{I}}}}
\newcommand{\infpath}{\ensuremath{\pi}}
\newcommand{\finpath}{\ensuremath{\hat{\pi}}}
\newcommand{\finorinfpath}{\ensuremath{\bar{\pi}}}
\newcommand{\last}[1]{\ensuremath{\mathit{last}(#1)}}
\newcommand{\durof}[1]{\ensuremath{\mathit{dur}(#1)}}
\newcommand{\infpaths}[1]{\ensuremath{\mathit{Paths}_\mathrm{inf}^{#1}}}
\newcommand{\finpaths}[1]{\ensuremath{\mathit{Paths}_\mathrm{fin}^{#1}}}
\newcommand{\lengthofpath}[1]{\ensuremath{|#1|}}
\newcommand{\prefixtime}[2]{\ensuremath{\mathit{prefix}_\mathit{time}(#1,#2)}}
\newcommand{\prefixsteps}[2]{\ensuremath{\mathit{prefix}_\mathit{steps}(#1,#2)}}
\newcommand{\strat}{\ensuremath{\sigma}}
\newcommand{\strats}[1]{\ensuremath{\Sigma^{#1}}}
\newcommand{\stratsmd}[1]{\ensuremath{\Sigma_\mathrm{md}^{#1}}}
\newcommand{\probmeasure}[2]{\ensuremath{\mathrm{Pr}^{#1}_{#2}}}
\newcommand{\expval}[2]{\ensuremath{\mathrm{Ex}^{#1}_{#2}}}
\newcommand{\rewardassignment}{\ensuremath{\mathcal{R}}}
\newcommand{\rewofpath}[2]{\ensuremath{#1(#2)}}
\newcommand{\valuefunc}{\ensuremath{f}}
\newcommand{\lrafunc}[1]{\ensuremath{\mathit{lra}(#1)}}
\newcommand{\lrafuncsteps}[1]{\ensuremath{\mathit{lra}_{\mathit{steps}}(#1)}}
\newcommand{\totfunc}[1]{\ensuremath{\mathit{tot}(#1)}}
\newcommand{\stepbound}{\ensuremath{k}}
\renewcommand{\vec}{\mathbf}
\newcommand{\pointi}{\ensuremath{p}}
\newcommand{\point}{\ensuremath{\vec{p}}}
\newcommand{\points}{\ensuremath{P}}
\newcommand{\pointidomain}{\ensuremath{\rr}}
\newcommand{\pointdomain}{\ensuremath{\pointidomain^\numobj}}
\newcommand{\cl}[1]{\ensuremath{\mathit{cl}(#1)}}
\newcommand{\downset}[1]{\ensuremath{\mathit{dwconv}(#1)}}
\newcommand{\conv}[1]{\ensuremath{\mathit{conv}(#1)}}
\newcommand{\weightvector}{\ensuremath{\vec{w}}}
\newcommand{\underapprox}{\ensuremath{P}}
\newcommand{\overapprox}{\ensuremath{Q}}
\newcommand{\numobj}{\ensuremath{\ell}}
\newcommand{\numtotobj}{{\ensuremath{k}}}
\newcommand{\objindex}{\ensuremath{j}}
\newcommand{\multiobjquery}{\ensuremath{\mathcal{F}}}
\newcommand{\lramultiobjquery}{\ensuremath{\mathcal{F}_\mathit{lra}}}
\newcommand{\ach}[2]{\ensuremath{\mathit{Ach}^{#1}(#2)}}
\newcommand{\pareto}[2]{\ensuremath{\mathit{Pareto}^{#1}(#2)}}
\begin{document}
\title{%
Multi-objective Optimization of Long-run Average and Total Rewards
% Also change above
%\thanks{}
}
%\titlerunning{Multi-objective Optimization of LRA and Total Rewards}
%\iftoggle{TR}{\subtitle{Technical Report}}{}%
\author{
Tim Quatmann\inst{1}$^{(\text{\Letter})}$\orcidID{0000-0002-2843-5511}
\and Joost-Pieter Katoen\inst{1}\orcidID{0000-0002-6143-1926}
}
\authorrunning{Quatmann, Katoen}
\institute{RWTH Aachen University, Aachen, Germany\\\email{tim.quatmann@cs.rwth-aachen.de}}
\date{\today}
\maketitle

\begin{abstract}
This paper presents an efficient procedure for multi-objective model checking of long-run average reward (aka: mean pay-off) and total reward objectives as well as their combination.
We consider this for Markov automata, a compositional model that captures both traditional Markov decision processes (MDPs) as well as a continuous-time variant thereof. 
The crux of our procedure is a generalization of Forejt \emph{et al.}'s approach for total rewards on MDPs to arbitrary combinations of long-run and total reward objectives on Markov automata.
Experiments with a prototypical implementation on top of the \storm model checker show encouraging results for both model types and indicate a substantial improved performance over existing multi-objective long-run MDP model checking based on linear programming.
\end{abstract}

% 16 Pages
\section{Introduction}
\paragraph{MDP model checking}
In various applications, multiple decision criteria and uncertainty frequently co-occur.
Stochastic decision processes for which the objective is to achieve multiple---possibly partly conflicting---objectives occur in various fields.
These include operations research, economics, planning in AI, and game theory, to mention a few.
This has stimulated model checking of Markov decision processes (MDPs)~\cite{Put94}, a prominent model in decision making under uncertainty, against multiple objectives.
This development enlarges the rich plethora of automated MDP verification algorithms against single objectives~\cite{BHK19}. 

\paragraph{Multi-objective MDP}
Various types of objectives known from conventional---single-objective---model checking  have been lifted to the multi-objective case.
These objectives range over $\omega$-regular specifications including LTL~\cite{EKVY07,FKNPQ11}, expected (discounted and non-discounted) total rewards~\cite{CMH06,FKNPQ11,FKP12,RSSOWW14,DKQR20}, step-bounded and reward-bounded reachability probabilities~\cite{FKP12,HJKQ20}, and---most relevant for this work---\emph{expected long-run average (LRA) rewards}~\cite{Cha07,BBCFK14,CKK17}, also known as mean pay-offs.
For the latter, all current approaches build upon linear programming (LP) which yields a theoretical time-complexity polynomial in the model size.
However, in practice, LP-based methods are often outperformed by approaches based on value- or strategy iteration~\cite{FKP12,ACDKM17,KM17}.
The LP-based approach of~\cite{FKNPQ11} and the iterative approach of~\cite{FKP12} are both implemented in \prism~\cite{KNP11} and \storm~\cite{HJKQV20}.
The LP formulation of~\cite{BBCFK14,CKK17} is implemented in \multigain~\cite{BCFK15}, an extension of \prism for multi-objective LRA rewards.

\paragraph{Contributions of this paper}
We present a computationally efficient procedure for multi-objective model checking of LRA reward and total reward objectives as well as their mixture.
The crux of our procedure is \emph{a generalization} of Forejt \emph{et al.}'s iterative approach~\cite{FKP12} for total rewards on MDPs \emph{to expected LRA reward objectives}.
In fact, our approach supports the arbitrary \emph{mixtures} of expected LRA and total reward objectives.
To our knowledge, such mixtures have not been considered so far.
Experiments on various benchmarks using a prototypical implementation in \storm indicate that this generalized iterative algorithm outperforms the LP approach implemented in \multigain. 

In addition, we extend this approach towards \emph{Markov automata} (MA)~\cite{EHZ10,DH13}, a continuous-time variant of MDP that is amenable to compositional modeling.
This model is well-suited, among others, to provide a formal semantics for dynamic fault trees and generalized stochastic Petri nets~\cite{EHKZ013}. 
Our multi-objective LRA approach for MA builds upon the value-iteration approach for single-objective expected LRA rewards on MA~\cite{BWH17} which---on practical models---outperforms the LP-based approach of~\cite{GTHRS14}.
To the best of our knowledge, this is the \emph{first multi-objective expected LRA reward approach for MA}.
Experimental results on MA benchmarks show that the treatment of a continuous-time variant of LRA comes at almost no time penalty compared to the MDP setting.

% general MDP
%%In the past years, \emph{MDP model checking}~\cite{Put94,BK08,FKNP11} received a lot of attention resulting in several new algorithms for objectives such as expected total~\cite{QK18,HK20} and long-run average~\cite{WBBHCHDT10,ACDKM17,KM17} rewards.
%%
%%Furthermore, new flavors of MDP model checking have emerged~\cite{BHK19} including the simultaneous analysis of \emph{multiple objectives}~\cite{CMH06,EKVY07,RVWD13}.

\paragraph{Other related work}
Mixtures of various other objectives have been considered for MDPs.
This includes conditional expectations or ratios of reward functions~\cite{BDKDKMW14,BDK14}.
\cite{GZ18} considers LTL formulae with probability thresholds while maximizing an expected LRA reward.
\cite{HJKQ20,KBCDDKMM18} address multi-objective quantiles on reachability properties
%% to answer questions like ``What is the minimum amount of resources (\eg time, energy, money) required to be able to reach a goal with at least probability $0.9$''.
while \cite{RRS17,CKK17} consider multi-objective combinations of percentile queries on MDP and LRA objectives.
%% , asking that for each objective $\objindex$, the probability that the long-run average value of a system execution exceeds a pre-given thresholds $v_\objindex$ is at least $u_\objindex$.
\cite{BDKKR17} treats resilient systems ensuring constraints on the repair mechanism while maximizing the expected LRA reward when being operational. 
The trade-off between expected LRA rewards and their variance is analyzed in~\cite{BCFK17}. \cite{HHHLT19} studies multiple objectives on interval MDP, where transition probabilities can be specified as intervals in cases where the concrete probabilities are unknown.
% Stochastic games
Multiple LRA reward objectives for \emph{stochastic games} have been treated using LP~\cite{CD16} and value iteration over convex sets~\cite{BKTW15,BKW18};
%, where in each iteration each state gets assigned a convex set.
the latter is included in~\prismgames\cite{KPW18,KNPS20}.
These approaches can also be applied to MDPs when viewed as one-player stochastic games.
Algorithms for single-objective model checking of MA deal with objectives such as expected total rewards, time-bounded reachability probabilities, and expected long-run average rewards~\cite{HH12,GHHKT14,GTHRS14,BF19}. 
The only multi-objective approach for MA so far~\cite{QJK17} shows that any method for multi-objective MDP can be applied on (a discretized version of) an MA for queries involving unbounded or time-bounded reachability probabilities and expected total rewards, but no long-run average rewards.
%\paragraph{Organisation of the paper.}

\section{Preliminaries}
\label{sec:prelim}

The set of \emph{probability distributions} over a finite set $\Omega$ is given by $\dists{\Omega} = \{ \dist \colon \Omega \mapsto [0,1] \mid \sum_{\omega \in \Omega} \dist(\omega) = 1\}$.
For a distribution $\dist \in \dists{\Omega}$ we let $\supp{\dist} = \set{ \omega \in \Omega \mid \dist(\omega) > 0  }$ denote its support. $\dist$ is \emph{Dirac} if $|\supp{\dist}| = 1$.
%For Boolean expression $\mathit{cond}$, let $\iverson{\mathit{cond}} = 1$ if $\mathit{cond}$ is true and $\iverson{\mathit{cond}} = 0$ otherwise.

Let $\rrnn = \set{x \in \rr \mid x \ge 0}$, $\rrgz = \set{x \in \rr \mid x > 0}$, and $\rrext = \rr \cup \set{-\infty, \infty}$ denote the non-negative, positive, and extended real numbers, respectively.
For a point $\point = \tuple{\pointi_1, \dots, \pointi_\numobj} \in \pointdomain$, $\numobj \in \nn$ and $i \in \set{1, \dots, \numobj}$ we write $\tupleaccess{\point}{i}$ for its $i^\mathrm{th}$ entry $\pointi_i$. 
For $\point,\vec{q} \in \pointdomain$ let $\point \cdot \vec{q}$ denote the dot product. We further write $\point \le \vec{q}$ iff $\fa{i} \tupleaccess{\point}{i} \le \tupleaccess{\vec{q}}{i}$ and $\point \lneq \vec{q}$ iff $\point \le \vec{q} \wedge \point \neq \vec{q}$.
The \emph{closure} of a set $\points \subseteq \pointdomain$ is the union of $\points$ and its boundary, denoted by $\cl{\points}$.
The \emph{convex hull} of $\points$ is given by $\conv{P} = \set{ \sum_{i = 1}^{\numobj} \dist(i) \cdot \point_i \mid \dist \in \dists{\set{1, \dots, \numobj}}, \point_1, \dots, \point_\numobj \in \points }$.
The \emph{downward convex hull} of $\points$ is given by $\downset{\points} = \set{\vec{q} \in \pointdomain \mid \ex{\point \in \conv{\points}} \vec{q} \le \point}$.

\subsection{Markov Automata}
Markov automata~(MA)~\cite{EHZ10,DH13} provide an expressive formalism that allows one to model exponentially distributed delays, nondeterminism, probabilistic branching, and instantaneous (undelayed) transitions.
\begin{definition}%[Markov Automaton]
\label{def:ma}
A \emph{Markov Automaton} is a tuple $\ma = \matuple$ where
%\begin{itemize}
%\item 
%$\states$ and $\actions$ are finite sets of states and actions, respectively,
$\states$ is a finite set of states, $\actions$ is a finite set of actions,
%\item 
$\transitions \colon \states \to \rrgz \cup 2^{\actions}$ is a transition function assigning exit rates to Markovian states $\ms[\ma] = \set{\state \in \states \mid \transitions(\state) \in \rrgz}$ and sets of enabled actions to probabilistic states $\ps[\ma] = \set{\state \in \states \mid \transitions(\state) \subseteq \actions}$, and
%\item 
$\probabilities \colon \ms[\ma] \cup \sa[\ma] \to \dists{\states}$ with $\sa[\ma] = \set{\sapair \in \ps \times \actions \mid \action \in \transitions(\state)}$ is a  probability function that assigns a distribution over possible successor states for each Markovian state and enabled state-action pair.
%\end{itemize}
\end{definition}
Let $\ma = \matuple$ be an MA.
If $\ma$ is clear from the context, we may omit the superscript from $\ms[\ma]$, $\ps[\ma]$, $\sa[\ma]$, and further notations introduced below. 
Intuitively, the time $\ma$ stays in a Markovian state $\mstate \in \ms$ is governed by an \emph{exponential distribution} with rate $\transitions(\mstate) \in \rrgz$, \ie the probability to take a transition from $\mstate$ within $\dur \in \rrnn$ time units is $1-e^{-\transitions(\mstate) \cdot \dur}$.
Upon taking a transition, a successor state $\state' \in \states$ is drawn from the distribution $\probabilities(\mstate)$, \ie $\probabilities(\mstate)(\state')$ is the probability that the transition leads to $\state' \in \states$.
For probabilistic states $\pstate \in \ps$, an enabled action $\action \in \transitions(\pstate)$ has to be picked and a successor state is drawn from $\probabilities(\psapair)$ (without any delay).
Nondeterminism is thus only possible at probabilistic states.
We assume deadlock free MA, \ie $\fa{\state \in \ps[\ma]} \transitions(\state) \neq \emptyset$.
\begin{remark}
	To enable more flexible modeling such as parallel compositions, the literature (\eg \cite{EHZ10,GTHRS14}) often considers a more liberal variant of MA where (i) different successor distributions can be assigned to the same state-action pair and (ii) states can be both, Markovian \emph{and} probabilistic.
	%This variant enables more flexible modeling such as parallel compositions but also complicates notations.
MAs as in \Cref{def:ma}---also known as closed MA---are equally expressive: they can be constructed via action renaming and by applying the so-called \emph{maximal progress assumption}~\cite{EHZ10}.
\end{remark}
An \emph{infinite path }in $\ma$ is a sequence $\infpath = \state_0 \actionordur_1 \state_1 \actionordur_2 \dots$ where for each $i \ge 0$ either
%\begin{itemize}
%\item
$\state_i \in \ms$, $\actionordur_{i+1} \in \rrnn$, and $\probabilities(\state_i)(\state_{i+1}) > 0$ or
%\item
$\state_i \in \ps$, $\actionordur_{i+1} \in \transitions(\state_i)$, and $\probabilities(\tuple{\state_i, \actionordur_{i+1}})(\state_{i+1}) > 0$.
%\end{itemize}
Intuitively, if $\state_i$ is Markovian, $\actionordur_{i+1} \in \rrnn$ reflects the time we have stayed in $\state_i$ until transitioning to $\state_{i+1}$.
If $\state_i$ is probabilistic, $\actionordur_{i+1} \in \actions$ is the performed action via which we transition to $\state_{i+1}$.
A finite path $\finpath = \state_0 \actionordur_1 \state_1 \actionordur_2 \dots \actionordur_{n} \state_n$ is a finite prefix of an infinite path $\infpath$.
We set $\last{\finpath} = \state_n$ and $\lengthofpath{\finpath} = n$ for finite $\finpath$ and $\lengthofpath{\infpath} = \infty$ for infinite $\infpath$.
For (finite or infinite) path $\finorinfpath = \state_0 \actionordur_1 \state_1 \actionordur_2 \dots$ let 
$\durof{\finorinfpath} = \sum_{i=1}^{\lengthofpath{\finorinfpath}} \durof{\actionordur_i}$ be the total duration of $\finorinfpath$ where $\durof{\actionordur} = \actionordur$ if $\actionordur \in \rrnn$ and $0$ otherwise.
If $\finorinfpath$ is infinite and $\durof{\finorinfpath} < \infty$, the path is called \emph{Zeno}.
For $\stepbound \in \nn$ with $\stepbound \le \lengthofpath{\finorinfpath}$ we let $\prefixsteps{\finorinfpath}{\stepbound}$ denote the unique prefix $\pi'$ of $\finorinfpath$ with $\lengthofpath{\pi'} = \stepbound$
and for $\dur \in \rrnn$ we let $\prefixtime{\finorinfpath}{\dur}$
denote the largest prefix of $\finorinfpath$ with total duration at most $\dur$.
%define $\prefixtime{\finorinfpath}{\dur} = \prefixsteps{\finorinfpath}{\stepbound_\timebound}$, where $\stepbound_\timebound = \max \set{\stepbound \in \nn \cup \set{\infty} \mid \durof{\prefixsteps{\finorinfpath}{\stepbound}} \le \timebound < \durof{\prefixsteps{\finorinfpath}{\stepbound + 1}}}$.
The sets of infinite and finite paths of $\ma$ are given by $\infpaths{\ma}$ and $\finpaths{\ma}$, respectively.

A \emph{component} of $\ma$ is a set $\component \subseteq \ms \cup \sa$.
We set 
$\statesof{\component} = (\component \cap \ms) \cup \set{\state \in \ps  \mid \ex{\action} \sapair \in \component}$.
$\component$ is \emph{closed} if $\fa{\componentelem \in \component} \supp{\probabilities(\componentelem)} \subseteq \statesof{\component}$ and  \emph{connected} if for all $\state,\state' \in \statesof{\component}$ there is $\state_0 \actionordur_1 \dots \actionordur_{n} \state_n \in \finpaths{}$ with $\state = \state_0$, $\state' = \state_n$, and for each $i\ge0$ either $\state_i \in \component \cap \ms$ or $\tuple{\state_i,\actionordur_{i+1}} \in \component \cap \sa$.
An \emph{end component (EC)} is a closed and connected component.
An EC is \emph{maximal} if it is not a proper subset of another EC.
$\mecs{\ma}$ denotes the maximal ECs of~$\ma$.
For an EC $\component$ let $\exitsof{\component} = \set{\tuple{\state, \action} \in \sa[\ma] \mid \state \in \statesof{\component} \text{ and } \tuple{\state, \action} \notin \component }$.
\begin{definition}\label{def:subma}
The \emph{sub-MA} of $\ma$ induced by a closed component $\component$ is given by $\submodel{\ma}{\component} = \tuple{\statesof{\component}, \actions, \transitions_\component, \probabilities_\component}$ where $\transitions_\component(\state) = \transitions(\state)$ 
if $\state \in  \component \cap \ms[\ma]$ and otherwise
$\transitions_\component(\state) = \set{ \action \in \transitions(\state) \mid \sapair \in \component}$, and $\probabilities_\component$ is the restriction of $\probabilities$ to~$\component$.
\end{definition}
A \emph{strategy} for $\ma$ resolves the nondeterminism at probabilistic states by providing probability distributions over enabled actions based on the execution history.
\begin{definition}\label{def:strat}
A (general) \emph{strategy} for MA $\ma = \matuple$ is a function $\strat \colon \finpaths{} \to \dists{\actions} \cup \set{\actbot}$ such that for $\finpath \in\finpaths{}$ we have
$\strat(\finpath) \in \dists{\transitions(\last{\finpath})}$ if $\last{\finpath} \in \ps$  and $\strat(\finpath) = \actbot$ otherwise.
\end{definition}
A strategy $\strat$ is called
 \emph{memoryless} if the choice only depends on the current state, \ie
$\fa{\finpath, \finpath' \in \finpaths{}} \last{\finpath} = \last{\finpath'} \text{ implies } \strat(\finpath) = \strat(\finpath')$.
If all assigned distributions are Dirac, $\strat$ is called \emph{deterministic}. 
Let $\strats{\ma}$ and $\stratsmd{\ma}$ denote the set of general and memoryless deterministic strategies of $\ma$, respectively. For simplicity, we often interpret $\strat \in \stratsmd{\ma}$ as a function $\strat \colon \states \to \actions \cup \set{\actbot}$.
The \emph{induced sub-MA} for $\strat \in \stratsmd{\ma}$ is given by $ \submodel{\ma}{\,\ms \cup \set{\tuple{\state, \strat(\state)} \mid \state \in \ps}\,}$.
Strategy $\strat \in \strats{\ma}$ and initial state $\sinit \in \states$ define a \emph{probability measure} $\probmeasure{\ma,\sinit}{\strat}$ that assigns probabilities to sets of infinite paths~\cite{HH12}.
The expected value of $\valuefunc \colon \infpaths{} \to \rrext$ is given by the Lebesque integral
$
\expval{\ma,\sinit}{\strat}(\valuefunc) = \int_{\infpath \in \infpaths{}} \valuefunc(\infpath) \diff \probmeasure{\ma,\sinit}{\strat}(\infpath)
$.
%If the strategy for $\ma$ is unique (\ie $\fa{\state \in \ps} |\transitions(\state)| = 1$), we may omit the subscript $\strat$.
%The superscripts $\ma$ and $\sinit$ will be omitted, if clear from the context.
%For $\strat \in \stratsmd{\ma}$ and measurable $\valuefunc$ we get $\expval{\ma,\sinit}{\strat}(\valuefunc) = \expval{\inducedsubmodel{\ma}{\strat},\sinit}{}(\valuefunc)$.

\subsection{Reward-based Objectives}

MA can be equipped with \emph{rewards} to model various quantities like, \eg energy consumption or the number of produced units.
We distinguish between 
\emph{transition} rewards $\rewardassignment_\mathrm{trans} \colon \ms \cup \sa \times \states \to \rr$ that are collected when transitioning from one state to another and
\emph{state} rewards $\rewardassignment_\mathrm{state} \colon \states \to \rr$ that are collected over time, \ie staying in state $\state$ for $\dur$ time units yields a reward of $\rewardassignment_\mathrm{state}(\state) \cdot \dur$.
Since no time passes in probabilistic states, state rewards $\rewardassignment_\mathrm{state}(\state)$ for $\state \in \ps$ are not relevant.
A reward assignment combines the two notions.
\begin{definition}
A \emph{reward assignment} for MA $\ma$ and $\rewardassignment_\mathrm{state}, \rewardassignment_\mathrm{trans}$ as above is a function $\rewardassignment \colon (\ms \times \rrnn) \cup \sa \times \states \to \rr$ with
\[
\rewardassignment(\tuple{\state,\actionordur},\state') =
\begin{cases}
	\rewardassignment_\mathrm{state}(\state) \cdot \actionordur + \rewardassignment_\mathrm{trans}(\state,\state') & \text{if } \state \in \ms, \actionordur \in \rrnn\\
	\rewardassignment_\mathrm{trans}(\tuple{\state,\actionordur},\state') & \text{if } \state \in \ps, \actionordur \in \transitions(\state).
\end{cases}
\]
\end{definition}
We fix a reward assignment $\rewardassignment$ for $\ma$.
$\rewardassignment$ can also be applied to any sub-MA $\submodel{\ma}{\component}$ of $\ma$ in a straightforward way.
%However, it should be noted that an MA can have multiple reward assignments $\rewardassignment_1, \rewardassignment_2, \dots, \rewardassignment_\numobj$ which is essential for multi-objective model checking.
%
For a component $\component \subseteq \ms \cup \sa$ we write $\rewardassignment(\component) \ge 0$ if all rewards assigned within $\component$ are non-negative, formally $\fa{\tuple{\state,\actionordur} \in (\component \cap \sa) \cup ((\component \cap \ms) \times \rrnn)} \fa{\state' \in \statesof{\component}} \rewardassignment(\tuple{\component,\actionordur},\state') \ge 0$.
The shortcuts $\rewardassignment(\component) \le 0$ and $\rewardassignment(\component) = 0$ are similar.
The reward of a finite path $\finpath = \state_0 \actionordur_1 \state_1 \actionordur_2 \dots \actionordur_{n} \state_n$ is denoted by $\rewofpath{\rewardassignment}{\finpath} = \sum_{i=1}^{\lengthofpath{\finorinfpath}} \rewardassignment(\tuple{\state_{i-1}, \actionordur_i}, \state_i)$.
%The reward of an infinite path $\infpath$ is given by $\rewofpath{\rewardassignment}{\infpath} = \limsup_{\stepbound \to \infty}  \rewofpath{\rewardassignment}{\prefixsteps{\infpath}{\stepbound}}$.
%
\begin{definition}\label{def:tot}
The \emph{total reward objective} for reward assignment $\rewardassignment$ is given by  $\totfunc{\rewardassignment}\colon \infpaths{} \to \rrext$ with
$\totfunc{\rewardassignment}(\infpath) = \limsup_{\stepbound \to \infty}  \rewofpath{\rewardassignment}{\prefixsteps{\infpath}{\stepbound}}$.
\end{definition}
\begin{definition}\label{def:lra}
The \emph{long-run average (LRA) reward objective} for $\rewardassignment$ is given by $\lrafunc{\rewardassignment}\colon \infpaths{} \to \rrext$ with
$\lrafunc{\rewardassignment}(\infpath) = \limsup_{\dur\to \infty} \frac{1}{ \dur} \cdot \rewofpath{\rewardassignment}{\prefixtime{\infpath}{\dur}}$.
\end{definition}
\Cref{sec:weighted} considers assumptions under which the limit in both definitions can be attained, \ie $\limsup$ can be replaced by $\lim$.
%We are interested in expected values for these objectives.
%Since $\totfunc{\rewardassignment}$ and $\lrafunc{\rewardassignment}$ are measurable, we can take their expected value \wrt a strategy $\strat \in \strats{\ma}$ and an initial state $\sinit$. For example, $\expval{\ma, \sinit}{\strat}(\lrafunc{\rewardassignment})$ is the expected long-run average reward of $\ma$ for $\rewardassignment$, $\strat$, and $\sinit$.
%
%For simplicity, we mostly consider expected total- and LRA reward objectives. 
The incorporation of other objectives such as \emph{reachability probabilities} are discussed in \Cref{rem:others}.

\subsection{Markov Decision Processes}
A \emph{Markov Decision Process (MDP)} $\mdp$  is an MA with only probabilistic states, \ie $\ms[\mdp] = \emptyset$.
All notions above also apply to MDP.
However, since all  paths of an MDP have duration 0, there is no timing information available.
For MDP, we therefore usually consider \emph{steps} instead of time.
In particular, for reward assignment $\rewardassignment$ we consider $\lrafuncsteps{\rewardassignment}$ instead of $\lrafunc{\rewardassignment}$, where
%\begin{itemize}
%\item 
%$\totfuncsteps{\rewardassignment}(\infpath) = \limsup_{\stepbound\to \infty}  \rewofpath{\rewardassignment}{\prefixsteps{\infpath}{\stepbound}}$ and
$\lrafuncsteps{\rewardassignment}(\infpath) = \limsup_{\stepbound\to \infty} \frac{1}{ \stepbound} \cdot \rewofpath{\rewardassignment}{\prefixsteps{\infpath}{\stepbound}}$.
% is the \emph{long-run step-average reward} of $\infpath$, and
%
Below, we focus on MA.
Applying our results to step-based LRA rewards on MDPs is straightforward.
Time-based LRA reward objectives for MA can not straightforwardly be reduced to step-based measures for MDP due to the interplay of delayed- and undelayed transitions.

\section{Efficient Multi-objective Model Checking}\label{sec:multi}

We formalize common tasks in multi-objective model checking and sketch our solution method based on~\cite{FKP12}.
We fix an MA $\ma= \matuple$  with initial state $\sinit \in \states$ and $\numobj >0$ objectives $\valuefunc_1, \dots, \valuefunc_\numobj \colon \infpaths{} \to \rr$ with $\multiobjquery = \tuple{\valuefunc_1, \dots, \valuefunc_\numobj}$.
The notation for expected values is lifted to tuples: $\expval{}{\strat}(\multiobjquery) = \tuple{\expval{}{\strat}(\valuefunc_1), \dots,  \expval{}{\strat}(\valuefunc_\numobj)}$.
% Below we let $\objindex$ range over the indices $\set{1, \dots, \numobj}$.

\subsection{Multi-objective Model Checking Queries}
Our aim is to maximize the expected value for each (potentially conflicting) objective $\valuefunc_\objindex$.
We impose the following assumption which can be asserted using single-objective model checking.
We further discuss the assumption in \Cref{rem:finiteness}. 
\begin{assumption}[Objective Finiteness] \label{as:finite}
	$\fa{\objindex} \sup \set{ \expval{}{\strat}(\valuefunc_\objindex) \mid {\strat \in \strats{}}} < \infty
	$.
\end{assumption}
%\convenientpagebreak
\begin{definition}\label{def:achpar}
For $\multiobjquery$ as above, 
$
\ach{}{\multiobjquery} = \set{ \point \in \pointdomain \mid \ex{\strat \in \strats{}} \point \le \expval{}{\strat}(\multiobjquery) }
%\ach{}{\multiobjquery} = \set{ \point \in \pointdomain \mid \ex{\strat \in \strats{}} \fa{\objindex} \tupleaccess{\point}{\objindex} \le \expval{\ma,\sinit}{\strat}(\valuefunc_\objindex) }
$
is the set of \emph{achievable points}. The \emph{Pareto front} is given by
$
\pareto{}{\multiobjquery} = \set{ \point \in \cl{\ach{}{\multiobjquery}} \mid \fa{\point' \gneq \point} \point' \notin \cl{\ach{}{\multiobjquery}} }.
$
\end{definition}
%
%We drop the superscripts $\ma$ and $\sinit$ if clear.
A point $\point \in \ach{}{\multiobjquery}$ is called \emph{achievable} and there is a single strategy $\strat$ that for each objective $\valuefunc_\objindex$ \emph{achieves} an expected value of at least $\tupleaccess{\point}{\objindex}$.
%We say that $\point' \in \pointdomain$ \emph{dominates} $\point$ if $\point' \gneq \point$
%A Pareto optimal point is achievable and not dominated by another achievable point.
Due to \Cref{as:finite}, the Pareto front is the \emph{frontier} of the set of achievable points, meaning that it is the smallest set $\points \subseteq \pointdomain$ with $\downset{\points} = \cl{\ach{}{\multiobjquery}}$.
We can thus interpret  $\pareto{}{\multiobjquery}$ as a representation for $\cl{\ach{}{\multiobjquery}}$ and vice versa.
The set of achievable points is closed iff all points on the Pareto front are achievable. %, \ie $\ach{}{\multiobjquery} = \cl{\ach{}{\multiobjquery}}$ iff $\pareto{}{\multiobjquery} \subseteq \ach{}{\multiobjquery}$.

\begin{figure}[t]
	\centering
	\begin{subfigure}[b]{0.51\linewidth}
%		\scalebox{1}{
			\centering
			\newcommand{\trate}[1]{{\scriptsize\ensuremath{\mathbf{#1}}}}
			\newcommand{\tprob}[1]{{\scriptsize\ensuremath{#1}}}
			\newcommand{\tact}[1]{{\scriptsize\ensuremath{#1}}}
			\newcommand{\trew}[1]{{\scriptsize\ensuremath{#1}}}
			\begin{tikzpicture}[->,node
				distance=0.8cm,bend angle=45, scale=0.6]
%				\path[use as bounding box,draw=red] (-1,-6) rectangle (7.5,0.7);
				\node[ms] (1) at (0,0) {$\state_1$};
				\node[ms, right=2.5cm of 1,label=0:\trew{\tuple{6,0}}] (2)  {$\state_2$};
				\node[ps, below=of 1] (3)  {$\state_3$};
				\node[ps, below=of 2] (4)  {$\state_4$};
				\node[ms, below=of 3,label=180:\trew{\tuple{2,0}}] (5)  {$\state_5$};
				\node[ms, below=of 4,label=0:\trew{\tuple{1,0}}] (6)  {$\state_6$};
				
				% Markovian transitions
				\path[->,densely dotted, semithick]
					(1) edge[] node[pos=0.25,above] {\trate{1}} node[dist] (d1) {} node[pos=0.75,above] {\tprob{0.5}} (2)
					(d1) edge[bend left=15] node[pos=0.5,right] {\tprob{0.5}} (3)
					(2) edge[] node[pos=0.25,right] {\trate{2}} node[dist] (d2) {} node[pos=0.75,right] {\tprob{1}} (4)
					(5) edge[loop above] node[pos=0.25,left] {\trate{1}} node[dist,yshift=-1pt] (d5) {} node[pos=0.75,right] {\tprob{1}} (5)
					(6) edge[] node[pos=0.25,right] {\trate{2}} node[dist] (d6) {} node[pos=0.75,right] {\tprob{1}} (4)
				;
				% Probabilistic transitions
				\path[->,semithick]
					($(3.west) - (1,0)$) edge[] (3) % mark s3 as initial
					(3) edge[] node[pos=0.25,left] {\tact{\alpha}} node[dist] (d3a) {} node[pos=0.75,left] {\trew{\tuple{1,-1}}} node[pos=0.75,right] {\tprob{1}} (1)
					(5) edge[<->] node[pos=0.25,above] {\tprob{0.5}} node[dist] (d3b) {} node[pos=0.75,above] {\tprob{0.5}} (6)
					(3) edge[-,bend left=15] node[pos=0.5,right] {\tact{\beta}} (d3b)
					
					(2) edge[<->,bend right] node[pos=0.25,left] {\tprob{0.6}} node[pos=0.5,dist] (d4a) {} node[pos=0.75,left] {\tprob{0.4}} (6)
					(4) edge[-] node[pos=0.5,above] {\tact{\alpha}} (d4a)
					
					(2) edge[<->,bend left] node[pos=0.25,right] {\tprob{0.3}} node[pos=0.5,dist] (d4b) {} node[pos=0.75,right] {\tprob{0.7}} (6)
					(4) edge[-] node[pos=0.5,above] {\tact{\beta}} (d4b)
					
				;
			\end{tikzpicture}
%		}
		\caption{MA $\ma$ with rewards $\tuple{\rewardassignment_1, \rewardassignment_2}$}
			\label{fig:ma}
	\end{subfigure}
\begin{subfigure}[b]{0.48\linewidth}
	\centering
	\begin{tikzpicture}[scale=1]
		\begin{axis}[
			width=5cm,
			xlabel={$\lrafunc{\rewardassignment_{1}}$},
			xmin=0,
			xmax = 5,
			xtick={0,1,2,3,4,5},
			ylabel={$\totfunc{\rewardassignment_{2}}$},
			ylabel style={yshift=-0.4cm},
			xlabel style={yshift=0.3cm},
			yticklabel style={font=\scriptsize},
			xticklabel style={font=\scriptsize},
			ymin=-3,
			ytick={-3,-2,-1,0,1},
			ymax=1.25,
			axis background/.style={fill=red!50},
			axis on top
			]
			
			\addplot[fill=green!50, very thin] coordinates {(-1,-4) (-1,0) (3,0) (4,-2) (4,-4)} -- cycle;
			
			\addplot[blue, ultra thick] coordinates  {(3,0) (4,-2)} ;
			\node[point,label=0:$\point_1$] at (axis cs:4,-2) {};
			\node[point,label=90:$\point_2$] at (axis cs:3,0) {};
			\node[] at (axis cs:2,-1.5) {$\ach{}{\multiobjquery}$};
			
		\end{axis} 
	\end{tikzpicture}
		\caption{$\ach{}{\multiobjquery}$ (green) and $\pareto{}{\multiobjquery}$ (blue)}
\label{fig:ach}
\end{subfigure}
		\caption{MA with achievable points and Pareto front for $\multiobjquery = \tuple{\lrafunc{\rewardassignment_{1}}, \totfunc{\rewardassignment_{2}} }$}
		\label{fig:maach}
\end{figure}

\begin{example}
 \Cref{fig:ma} shows an MA with initial state $\state_3$.
%	, Markovian states $\ms = \set{\state_1,\state_2, \state_5,\state_6}$, and probabilistic states $\ps = \set{\state_3,\state_4}$.
%	States $\state_3$ and $\state_4$ are probabilistic while the others are Markovian.
	Transitions are annotated with actions, rates (boldfaced), and successor probabilities.
	We also depict two reward assignments $\rewardassignment_{1}$ and $\rewardassignment_{2}$ by labeling states and transitions with tuples $\tuple{r_1,r_2}$ where, \eg
	$\rewardassignment_{2}(\state_3, \alpha, \state_1) = -1$ and for $\dur \in \rrnn$: $\rewardassignment_{1}(\state_2,\dur,\state_4) =  6 \cdot \dur$.

	For $\strat_1 \in \stratsmd{}$ with $\strat_1\colon \state_3,\state_4 \mapsto \alpha$, the EC $\set{\state_2, \tuple{\state_4,\alpha}, \tuple{\state_4,\beta}, \state_6}$ is reached almost surely (with probability 1), yielding $\expval{}{\strat_1}(\lrafunc{\rewardassignment_{1}}) = 0.6 \cdot 6 + 0.4 \cdot 1 = 4$ 
	and $\expval{}{\strat_1}(\totfunc{\rewardassignment_{2}}) = \sum_{i=0}^{\infty} -1 \cdot (0.5)^i = -2$.
	It follows that the point $\point_1 = \tuple{4,-2}$ as indicated in \Cref{fig:ach} is achievable.
	Similarly, $\strat_2 \in \stratsmd{}$ with $\strat_2 \colon \state_3 \mapsto \beta, \state_4 \mapsto \alpha$ achieves the point $\point_2 = \tuple{3,0}$.
	With strategies that randomly pick an action at $\state_3$, we can also achieve any point on the blue line in~\Cref{fig:ach} that connects $\point_1$ and $\point_2$. 
	This line coincides with the Pareto front $\pareto{}{\multiobjquery}$ for $\multiobjquery = \tuple{\lrafunc{\rewardassignment_{1}}, \totfunc{\rewardassignment_{2}} }$.
	The set of achievable points $\ach{}{\multiobjquery}$ (indicated in green) coincides with the downward convex hull of the Pareto front.
%\tq[inline]{Example non-closed achievable points here or in Appendix? $f_1(\pi) = \sum_{m=1}^{num visits of s_0} 1/m$ vs. $f_2(\pi) = [s_0 visited only finitely often]$}
\end{example}
For multi-objective model checking we are concerned with the following queries: %, where $\ma$, $\sinit$, and $\multiobjquery$ as above:
\begin{nproblem}[framed]{Multi-objective Model Checking Queries}
Qualitative Achievability:& Given point $\point \in \pointdomain$, decide if $\point \in \ach{}{\multiobjquery}$.\\
Quantitative Achievability:& Given $\pointi_2, \pointi_3, \dots, \pointi_\numobj \in \pointidomain$, compute or approximate
$\sup \set{\pointi \in \pointidomain \mid \tuple{\pointi, \pointi_2, \pointi_3, \dots, \pointi_\numobj} \in \ach{}{\multiobjquery}}$.\\
Pareto:& Compute or approximate $\pareto{}{\multiobjquery}$.
\end{nproblem}
%
%\begin{nproblem}[framed]{Qualitative Achievability Query}
%	Input: & MA $\ma$ with state $\sinit$, objectives $\multiobjquery = \tuple{\valuefunc_1, \dots, \valuefunc_\numobj}$, \\
%	Output: & True if $\point \in \ach{\ma,\sinit}{\multiobjquery}$ holds and False otherwise
%\end{nproblem}
%\begin{nproblem}[framed]{Quantitative Achievability Query}
%	Input: & MA $\ma$ with state $\sinit$, objectives $\multiobjquery = \tuple{\valuefunc_1, \dots, \valuefunc_\numobj}$, point $\point' \in \pointidomain^{\numobj - 1}$\\
%	Output: & $\sup \set{\pointi \in \pointidomain \mid \tuple{\pointi, \tupleaccess{\point'}{1}, \tupleaccess{\point'}{2}, \dots, \tupleaccess{\point'}{\numobj -1}} \in \ach{\ma,\sinit}{\multiobjquery}}$
%\end{nproblem}
%\begin{nproblem}[framed]{Pareto Query}
%	Input: & MA $\ma$ with state $\sinit$, objectives $\multiobjquery = \tuple{\valuefunc_1, \dots, \valuefunc_\numobj}$\\
%	Output: & $\pareto{\ma}{\multiobjquery}$
%\end{nproblem}

\subsection{Approximation of Achievable Points}
A practically efficient approach that tackles the above queries for expected total rewards in MDP was given in \cite{FKP12}.
% FKP12 origin:
It is based on so-called \emph{sandwich algorithms} known from convex multi-objective optimization~\cite{SAC93,RDH11}.
We extend the algorithm to arbitrary combinations of objectives $\valuefunc_\objindex$ on MA, including---and this is the main algorithmic novelty---mixtures of total- and LRA reward objectives.

\begin{algorithm}[t]
	\Input{MA $\ma$ with initial state $\sinit$, objectives $\multiobjquery = \tuple{\valuefunc_1, \dots, \valuefunc_\numobj}$}
	\Output{An approximation of $\ach{}{\multiobjquery}$}
	
	$\points \gets \emptyset$ \tcp{Collects achievable points found so far.}
	$\overapprox \gets \pointdomain$ \tcp{Excludes points that are known to be unachievable.}
	\DoUntil{Approximation $\downset{\underapprox} \subseteq \ach{}{\multiobjquery} \subseteq \overapprox$ answers multi-obj. query}{\label{line:refineloop:start}
		Select weights $ \weightvector \in \{ \weightvector' \in (\rrnn)^\numobj \mid \sum_{\objindex = 1}^{\numobj} \tupleaccess{\weightvector'}{\objindex} = 1\}$ and $\varepsilon > 0$\\
		Find $ v_\weightvector \ge \sup \set{ \weightvector \cdot \expval{}{\strat}(\multiobjquery) \mid \strat \in \strats{}} $, $ \strat_\weightvector \in \strats{}$ s.t. $ |v_\weightvector - \weightvector \cdot \expval{}{\strat_\weightvector}(\multiobjquery)| \le \varepsilon$ \label{alg:line:upperboundandstrat}\\
		Compute $\point_\weightvector \in \pointdomain$ with $\fa{\objindex}  \tupleaccess{\point_\weightvector}{\objindex} = \expval{}{\strat_\weightvector}(\valuefunc_\objindex)$ \label{alg:line:inducedpoint}\\
		$\underapprox \gets \underapprox \cup \set{\point_\weightvector}$; $\overapprox \gets \overapprox \cap \set{ \point \in \pointdomain \mid \weightvector \cdot \point \le v_\weightvector}$\label{alg:line:extend}
	}
	%		\label{line:refineloop:end}
	\caption{Approximating the set of achievable points}
	\label{alg:refinementloop}
\end{algorithm}

The idea is to iteratively refine an approximation of the set of achievable points $\ach{}{\multiobjquery}$.
The refinement loop is outlined in \Cref{alg:refinementloop}.
At the start of each iteration, the algorithm chooses a weight vector $\weightvector$ and a precision parameter $\varepsilon$ after some heuristic (details below).
Then, \Cref{alg:line:upperboundandstrat}, considers the weighted sum of the expected values of the objectives $\valuefunc_\objindex$.
More precisely, an upper bound $v_\weightvector$ for $\sup \set{ \weightvector \cdot \expval{}{\strat}(\multiobjquery) \mid \strat \in \strats{}}$ as well as a ``near optimal'' strategy $\strat_\weightvector$ need to be found such that the difference between the bound $v_\weightvector$ and the weighted sum induced by $\strat_\weightvector$ is at most $\varepsilon$.
In \Cref{sec:weighted}, we outline the computation of $v_\weightvector$ and $\strat_\weightvector$ for the case where $\multiobjquery$ consists of total-and LRA reward objectives.
Next, in \Cref{alg:line:inducedpoint} the algorithm computes a point $\point_\weightvector$ that contains the expected values for each individual objective $\valuefunc_\objindex$ under strategy $\strat_\weightvector$.
These values can be computed using off-the-shelf single-objective model checking algorithms on the model induced by $\strat_\weightvector$.
By definition, $\point_\weightvector$ is achievable.
Finally, \Cref{alg:line:extend} inserts the found point into the initially empty set $\underapprox$ and excludes points from the set $\overapprox$ (which initially contains all points) that are known to be unachievable.
The following theorem establishes the correctness of the approach.
We prove it using \Cref{lem:conv,lem:supremumbounds}.
\begin{theorem}\label{thm:algcorrect}
\Cref{alg:refinementloop} maintains the invariant $\downset{\underapprox} \subseteq \ach{}{\multiobjquery} \subseteq \overapprox$.
\end{theorem}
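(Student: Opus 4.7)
The plan is a routine induction on the number of iterations of the repeat-until loop, with the two inclusions handled separately and independently. Both sides are trivial at the start: before the loop is entered we have $\underapprox = \emptyset$, so $\downset{\underapprox} = \emptyset \subseteq \ach{}{\multiobjquery}$, and $\overapprox = \pointdomain \supseteq \ach{}{\multiobjquery}$. Assuming the invariant holds at the beginning of some iteration, I would show that after the updates in \Cref{alg:line:extend} it still holds for the new sets $\underapprox' = \underapprox \cup \set{\point_\weightvector}$ and $\overapprox' = \overapprox \cap \set{\point \mid \weightvector \cdot \point \le v_\weightvector}$.

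For the inclusion $\downset{\underapprox'} \subseteq \ach{}{\multiobjquery}$, I first note that the freshly added point $\point_\weightvector$ is itself achievable: by construction in \Cref{alg:line:inducedpoint}, $\tupleaccess{\point_\weightvector}{\objindex} = \expval{}{\strat_\weightvector}(\valuefunc_\objindex)$ for every $\objindex$, so $\point_\weightvector \le \expval{}{\strat_\weightvector}(\multiobjquery)$ witnesses $\point_\weightvector \in \ach{}{\multiobjquery}$. Combined with the induction hypothesis, every point of $\underapprox'$ lies in $\ach{}{\multiobjquery}$. It then remains to argue that the downward convex hull of a subset of $\ach{}{\multiobjquery}$ stays inside $\ach{}{\multiobjquery}$. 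The downward closure direction is immediate from the definition of $\ach{}{\multiobjquery}$: if $\vec{q} \le \point \le \expval{}{\strat}(\multiobjquery)$ then $\vec{q} \le \expval{}{\strat}(\multiobjquery)$ as well. The convex-combination direction is the key step and is exactly where the cited \Cref{lem:conv} enters: given finitely many strategies $\strat_1,\dots,\strat_n$ and weights $\dist \in \dists{\set{1,\dots,n}}$, one constructs a randomised strategy that first picks $i$ with probability $\dist(i)$ and then follows $\strat_i$, so that $\sum_i \dist(i)\cdot \expval{}{\strat_i}(\multiobjquery)$ is itself of the form $\expval{}{\strat}(\multiobjquery)$, hence achievable.

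For the inclusion $\ach{}{\multiobjquery} \subseteq \overapprox'$, I only need to show that the new halfspace $H_\weightvector = \set{\point \in \pointdomain \mid \weightvector \cdot \point \le v_\weightvector}$ contains $\ach{}{\multiobjquery}$; intersecting with the old $\overapprox$ then preserves the invariant by the induction hypothesis. So let $\point \in \ach{}{\multiobjquery}$ with witnessing strategy $\strat$, i.e.\ $\point \le \expval{}{\strat}(\multiobjquery)$. Since $\weightvector$ is chosen componentwise non-negative, monotonicity of the dot product gives $\weightvector \cdot \point \le \weightvector \cdot \expval{}{\strat}(\multiobjquery)$, and by the choice of $v_\weightvector$ in \Cref{alg:line:upperboundandstrat} as an upper bound on the supremum of weighted expected values we get $\weightvector \cdot \expval{}{\strat}(\multiobjquery) \le v_\weightvector$. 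This is the content of \Cref{lem:supremumbounds}, which lets me conclude $\point \in H_\weightvector$.

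The only conceptually delicate point is the convexity argument for $\downset{\underapprox} \subseteq \ach{}{\multiobjquery}$: under \Cref{as:finite} all involved expected values are finite, but one must still be careful that the randomised combination of strategies $\strat_1,\dots,\strat_n$ is measurable and yields the expected convex combination of values for every objective $\valuefunc_\objindex$ simultaneously. I would therefore invoke \Cref{lem:conv} as a black box for this step, and keep the rest of the proof as the straightforward induction sketched above.
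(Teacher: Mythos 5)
Your proposal is correct and follows essentially the same route as the paper's proof: each $\point_\weightvector$ added to $\underapprox$ is achievable by construction, the lower inclusion follows from downward closure of $\ach{}{\multiobjquery}$ together with \Cref{lem:conv}, and the upper inclusion follows from \Cref{lem:supremumbounds} and the choice of $v_\weightvector$ as an upper bound on the supremum. The explicit induction framing and the $n$-ary convex combination are only cosmetic elaborations of what the paper states directly.
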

\begin{lemma}\label{lem:supremumbounds}
	$\fa{\point \in \ach{}{\multiobjquery}, \weightvector \in (\rrnn)^\numobj} \weightvector \cdot \point \le \sup \set{\weightvector \cdot \expval{}{\strat}(\multiobjquery) \mid \strat \in \strats{}}$.
\end{lemma}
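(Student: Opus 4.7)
The plan is to unwind the two definitions (achievability and supremum) and use non-negativity of the weights. The lemma asserts the quite intuitive fact that, if a point is dominated by some strategy's expectation vector, then any non-negatively weighted sum of its coordinates is bounded by the corresponding weighted expectation, hence by the supremum.

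First I would fix an arbitrary $\point \in \ach{}{\multiobjquery}$ and $\weightvector \in (\rrnn)^\numobj$. By \Cref{def:achpar}, achievability gives a witness strategy $\strat \in \strats{}$ with $\point \le \expval{}{\strat}(\multiobjquery)$, i.e., $\tupleaccess{\point}{\objindex} \le \expval{}{\strat}(\valuefunc_\objindex)$ for every $\objindex \in \{1,\dots,\numobj\}$. Since $\tupleaccess{\weightvector}{\objindex} \ge 0$, multiplying preserves the inequality coordinate-wise, and summing over $\objindex$ yields
\[
\weightvector \cdot \point \;=\; \sum_{\objindex=1}^{\numobj} \tupleaccess{\weightvector}{\objindex} \cdot \tupleaccess{\point}{\objindex} \;\le\; \sum_{\objindex=1}^{\numobj} \tupleaccess{\weightvector}{\objindex} \cdot \expval{}{\strat}(\valuefunc_\objindex) \;=\; \weightvector \cdot \expval{}{\strat}(\multiobjquery).
\]
Here we use the tuple-lifted notation for expectations from the beginning of \Cref{sec:multi}, and the finiteness assumption (\Cref{as:finite}) to ensure every $\expval{}{\strat}(\valuefunc_\objindex)$ is a well-defined real number so the arithmetic is unproblematic.

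Finally, since $\strat$ is one particular element of $\strats{}$, its weighted expectation is bounded by the supremum over all strategies:
\[
\weightvector \cdot \expval{}{\strat}(\multiobjquery) \;\le\; \sup \set{\weightvector \cdot \expval{}{\strat'}(\multiobjquery) \mid \strat' \in \strats{}}.
\]
Chaining the two inequalities concludes the proof. There is no real obstacle here; the only subtle point is to justify using non-negativity of each $\tupleaccess{\weightvector}{\objindex}$ to preserve the inequality after multiplication, and to note that \Cref{as:finite} keeps the weighted sums in $\rr$ rather than $\rrext$ (in particular, no indeterminate $0 \cdot \infty$ arises).
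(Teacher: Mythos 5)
Your proof is correct and follows exactly the same route as the paper's: take the witness strategy $\strat_\point$ achieving $\point$, use non-negativity of the weights to pass from $\point \le \expval{}{\strat_\point}(\multiobjquery)$ to the weighted-sum inequality, and then bound by the supremum over all strategies. The extra remarks on \Cref{as:finite} are a harmless elaboration of what the paper leaves implicit.
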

\begin{proof}
	Let $\point \in \ach{}{\multiobjquery}$ be achieved by strategy $\strat_\point \in \strats{}$. The claim follows from
	$$
	\weightvector \cdot \point 
	= \sum_{\objindex = 1}^{\numobj} \tupleaccess{\weightvector}{\objindex} \cdot \tupleaccess{\point}{\objindex}
	\le \sum_{\objindex = 1}^{\numobj} \tupleaccess{\weightvector}{\objindex} \cdot \expval{}{\strat_\point}(\valuefunc_\objindex)
	\le \sup \Big\{ \sum_{\objindex = 1}^{\numobj} \tupleaccess{\weightvector}{\objindex} \cdot  \expval{}{\strat}(\valuefunc_\objindex) \,\Big|\, \strat \in \strats{}\Big\}.
	$$
\end{proof}
\begin{lemma}\label{lem:conv}
$\ach{}{\multiobjquery}$ is convex, \ie $\ach{}{\multiobjquery} = \conv{\ach{}{\multiobjquery}}$.
\end{lemma}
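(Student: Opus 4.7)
The plan is the standard convexity-via-mixing argument, adapted to the MA setting. Let $\point, \point' \in \ach{}{\multiobjquery}$ and fix $\lambda \in [0,1]$; we want to show that $\lambda \point + (1-\lambda)\point' \in \ach{}{\multiobjquery}$. By definition there exist $\strat, \strat' \in \strats{}$ with $\point \le \expval{}{\sinit}^{\strat}(\multiobjquery)$ and $\point' \le \expval{}{\sinit}^{\strat'}(\multiobjquery)$.

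The core step is to construct a \emph{mixed} general strategy $\strat_\lambda \in \strats{}$ that tosses a single $\lambda$-biased coin at the beginning of the execution and subsequently behaves like $\strat$ if the toss was ``heads'' and like $\strat'$ otherwise. Because \Cref{def:strat} permits arbitrary history-dependent randomization, such a strategy can be encoded: intuitively we enlarge the strategy's internal memory by one bit and use this bit to select $\strat$ or $\strat'$ on every finite path. More formally, the induced probability measure on $\infpaths{}$ is the convex combination
\[
\probmeasure{\ma,\sinit}{\strat_\lambda} \;=\; \lambda \,\probmeasure{\ma,\sinit}{\strat} \,+\, (1-\lambda)\,\probmeasure{\ma,\sinit}{\strat'},
\]
which can either be defined directly as a probability measure on $\infpaths{}$ and then shown to arise from a strategy in the sense of \Cref{def:strat}, or built explicitly by mixing the two strategies' distributions on each finite path in a way consistent with Bayes' rule.

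From linearity of the Lebesgue integral with respect to this mixed measure, it follows for every objective $\valuefunc_\objindex$ that
\[
\expval{\ma,\sinit}{\strat_\lambda}(\valuefunc_\objindex) \;=\; \lambda \,\expval{\ma,\sinit}{\strat}(\valuefunc_\objindex) \,+\, (1-\lambda)\,\expval{\ma,\sinit}{\strat'}(\valuefunc_\objindex),
\]
and hence $\lambda \point + (1-\lambda)\point' \le \expval{}{\sinit}^{\strat_\lambda}(\multiobjquery)$, which by \Cref{def:achpar} means the combined point is in $\ach{}{\multiobjquery}$. This gives $\conv{\ach{}{\multiobjquery}} \subseteq \ach{}{\multiobjquery}$; the reverse inclusion is trivial.

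The main technical obstacle is justifying that the mixed behavior described above is realizable by a single function $\strat_\lambda \colon \finpaths{} \to \dists{\actions} \cup \set{\actbot}$ as demanded by \Cref{def:strat}, rather than by a ``distribution over strategies.'' The standard resolution is to define $\strat_\lambda(\finpath)$ via a weighted average of $\strat(\finpath)$ and $\strat'(\finpath)$ with weights that are the posterior probabilities of the initial coin toss given the observed history $\finpath$ (these posteriors are computable from $\probmeasure{\ma,\sinit}{\strat}$ and $\probmeasure{\ma,\sinit}{\strat'}$). A short calculation then shows that the resulting single-strategy measure coincides with the mixture above, closing the argument.
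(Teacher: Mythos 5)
Your proposal is correct and follows essentially the same route as the paper: both achieve the convex combination by a strategy that flips a $\lambda$-biased coin initially and then mimics one of the two given strategies, and both address the same technicality (that a strategy in the sense of \Cref{def:strat} cannot store the coin's outcome and must instead reweight via the posterior probability of the flip given the observed history---the paper handles this in a footnote exactly as you do). The only cosmetic difference is that you assert the induced measure is the convex combination of the two measures and invoke linearity of the integral explicitly, whereas the paper states the resulting per-objective identity directly.
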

\begin{proof}
	We need to show that for two points $\point_1,\point_2 \in \ach{}{\multiobjquery}$ with achieving strategies $\strat_1, \strat_2 \in \strats{}$, any point $\point$ on the line connecting $\point_1$ and $\point_2$ is also achievable.
	Formally, for $w \in [0,1]$ show that $\point_w = w \cdot \point_1 + (1{-}w) \cdot \point_2 \in \ach{}{\multiobjquery}$.
	Consider the strategy $\strat_w$ that initially makes a coin flip%
	\footnote{Strategies as in \Cref{def:strat} can not ``store'' the outcome of the initial coin flip.
	Thus, given $\finpath \in \finpaths{}$, strategy $\strat_w$ actually has to consider the \emph{conditional} probability for the outcome of the coin flip, given that $\finpath$ has been observed. Alternatively, we could have also introduced strategies with memory.}%
	: With probability $w$ it mimics $\strat_1$ and otherwise it mimics $\strat_2$.
	We can show for all objectives $\valuefunc_\objindex$:
	$$
	\tupleaccess{\point_\weightvector}{\objindex}
	= w \cdot \tupleaccess{\point_1}{\objindex} + (1{-}w) \cdot \tupleaccess{\point_2}{\objindex}
	\le w \cdot \expval{}{\strat_1}(\valuefunc_\objindex) + (1{-}w) \cdot \expval{}{\strat_2}(\valuefunc_\objindex)
	= \expval{}{\strat_w}(\valuefunc_\objindex).
	$$
\end{proof}
We now show \Cref{thm:algcorrect}. A similar proof was given in~\cite{FKP12}.
\begin{proof}[of \Cref{thm:algcorrect}]
	All $\point_\weightvector \in \underapprox$ are achievable, \ie $\underapprox \subseteq \ach{}{\multiobjquery}$. By \Cref{def:achpar} and \Cref{lem:conv} we get $\downset{\underapprox} \subseteq \downset{\ach{}{\multiobjquery}} = \conv{\ach{}{\multiobjquery}} = \ach{}{\multiobjquery}$.
	Now let $\point \in \ach{}{\multiobjquery}$ and let $\weightvector$ be an arbitrary weight vector considered in some iteration of \Cref{alg:refinementloop} with corresponding value $v_\weightvector$ computed in \Cref{alg:line:upperboundandstrat}. \Cref{lem:supremumbounds} yields $\weightvector \cdot \point \le \sup \set{ \weightvector \cdot \expval{}{\strat}(\multiobjquery) \mid \strat \in \strats{}} \le  v_\weightvector $ and thus $\point \in \overapprox$.
\end{proof}
\Cref{alg:refinementloop} can be stopped at any time and the current approximation of $\ach{}{\multiobjquery}$ can be used to (i) decide qualitative achievability, (ii) provide a lower and an upper bound for quantitative achievability, and (iii) obtain an approximative representation of the Pareto front.

The \emph{precision parameter} $\varepsilon$ can be decreased dynamically to obtain a gradually finer approximation.
If $\ach{}{\multiobjquery}$ is closed, the supremum $\sup \set{\weightvector \cdot \expval{}{\strat}(\multiobjquery) \mid \strat \in \strats{}}$ can be attained by some strategy $\strat_\weightvector$, allowing us to set $\varepsilon = 0$.

We briefly sketch the \emph{selection of weight vectors} as proposed in~\cite{FKP12}.
In the first $\numobj$ iterations of \Cref{alg:refinementloop}, we optimize each objective $\valuefunc_\objindex$ individually, \ie we consider for all $\objindex$ the weight vector $\weightvector$ with $\tupleaccess{\weightvector}{i} = 0$ for $i \neq \objindex$ and $\tupleaccess{\weightvector}{\objindex} = 1$.
After that, we consider weight vectors that are orthogonal to a facet of the downward convex hull of the current set of points $\underapprox$.
To approximate the Pareto front, facets with a large distance to $\pointdomain \setminus \overapprox$ are considered first.
To answer a qualitative or quantitative achievability query, the selection can be  guided further based on the input point $\point \in \pointdomain$ or the input values $\pointi_2, \pointi_3, \dots, \pointi_\numobj \in \pointidomain$.
More details and further discussions on these heuristics can be found in~\cite{FKP12}.

%\begin{example}\label{ex:refinementalg}
%	\tq[inline]{illustrate approach and reasonable weight vector selection}
%\end{example}

\begin{remark}\label{rem:finiteness}
\Cref{as:finite} does not exclude $\expval{}{\strat}(\valuefunc_\objindex) = -\infty$ which occurs, \eg when objectives reflect resource consumption and some (bad) strategies require infinite resources.
Moreover, if \Cref{as:finite} is violated for an objective $\valuefunc_\objindex$ %, \ie $\sup \set{ \expval{}{\strat}(\valuefunc_\objindex) \mid {\strat \in \strats{}}} = \infty$
 we observe that for this objective, any (arbitrarily high) value $\pointi \in \rr$ can be achieved with some strategy $\strat \in \strats{}$ such that $\pointi \le \expval{}{\strat}(\valuefunc_\objindex)$.
Similar to the proof of~\Cref{lem:conv}, a strategy can be constructed that---with a small probability---mimics a strategy inducing a very high expected value for $\valuefunc_\objindex$ and---with the remaining (high) probability---optimizes for the other objectives.
Let $\multiobjquery_{-\objindex}$ be the tuple $\multiobjquery$ without $\valuefunc_\objindex$ and similarly for $\point \in \pointdomain$ let $\point_{-\objindex} \in \pointidomain^{\numobj - 1}$ be the point $\point$ without the $\objindex^\mathrm{th}$ entry.
Assuming $\inf \set{ \expval{}{\strat}(\valuefunc_\objindex) \mid {\strat \in \strats{}}} > -\infty$, we can show that $\cl{\ach{}{\multiobjquery}} = \set{\point \in \pointdomain \mid \point_{-\objindex} \in \cl{\ach{}{\multiobjquery_{-\objindex}}}}$.
Put differently, $\cl{\ach{}{\multiobjquery}}$ can be constructed from the achievable points obtained without the objective $\valuefunc_\objindex$.

\end{remark}
\section{Optimizing Weighted Combinations of Objectives}
\label{sec:weighted}
We now analyze weighted sums of expected values as in \Cref{alg:line:upperboundandstrat} of \Cref{alg:refinementloop}.
%
%More precisely, we solve the following problem.
\begin{nproblem}[framed]{Weighted Sum Optimization Problem}
		Input: & MA $\ma$ with initial state $\sinit$, objectives $\multiobjquery = \tuple{\valuefunc_1, \dots, \valuefunc_\numobj}$,
		\\& weight vector $ \weightvector \in \{ \weightvector' \in (\rrnn)^\numobj \mid \sum_{\objindex = 1}^{\numobj} \tupleaccess{\weightvector'}{\objindex} = 1\}$, precision $\varepsilon > 0$\\
		Output: & Value $v_\weightvector \in \rr$, 
		with $v_\weightvector \ge \sup \set{\weightvector \cdot \expval{}{\strat}(\multiobjquery) \mid \strat \in \strats{} } $ and
		\\ &
		strategy $\strat_\weightvector \in \strats{}$ such that $ |v_\weightvector - \weightvector \cdot \expval{}{\strat_\weightvector}(\multiobjquery)| \le \varepsilon$.
\end{nproblem}
\noindent We only consider total- and LRA reward objectives.
\Cref{rem:others} discusses other objectives.
%Roughly, our approach is as follows.
%If all objectives $\valuefunc_\objindex$ are either $\lrafunc{\rewardassignment_\objindex}$ or $\totfunc{\rewardassignment_\objindex}$ for reward assignments $\rewardassignment_\objindex$,
We show that
instead of a weighted sum of the expected values
we can consider weighted sums of the rewards. 
This allows us to combine all objectives into a single reward assignment and then apply single-objective model checking.
%The main challenge is to combine these techniques to analyze long-run average rewards as well as (unbounded and bounded) total rewards, simultaneously.
%Below, we let MA $\ma$ with state $\sinit$, weight vector $\weightvector$ and precision $\varepsilon$ be as in the weighted sum optimization problem.

\subsection{Pure Long-run Average Queries}
Initially, we restrict ourselves to LRA objectives and show a reduction of the weighted sum optimization problem to a single-objective long-run average reward computation.
As usual for MA~\cite{HH12,GHHKT14,BWH17} we forbid so-called Zeno behavior.
\begin{assumption}[Non-Zenoness]\label{as:zeno}
$\fa{\strat \in \strats{\ma}} \probmeasure{\ma}{\strat}(\set{\infpath  \mid \durof{\infpath} < \infty}) = 0$.
\end{assumption}
The assumption is equivalent to assuming that every EC of $\ma$ contains at least one Markovian state.
If the assumption holds, the limit in \Cref{def:lra} can be attained almost surely (with probability 1) and corresponds to a value $v \in \rr$.
Thus, \Cref{as:finite} for LRA objectives is already implied by \Cref{as:zeno}.
Let $\lramultiobjquery = \tuple{\lrafunc{\rewardassignment_1}, \dots, \lrafunc{\rewardassignment_\numobj}}$ with reward assignments $\rewardassignment_\objindex$.
Moreover, for weight vector $\weightvector$ let $\rewardassignment_\weightvector$ be the reward assignment with
$\rewardassignment_\weightvector(\tuple{\state, \actionordur}, \state') =  \sum_{\objindex = 1}^{\numobj} \tupleaccess{\weightvector}{\objindex} \cdot \rewardassignment_\objindex(\tuple{\state, \actionordur}, \state')$.
\begin{theorem}\label{thm:lra}
$\fa{\strat \in \strats{}}\weightvector \cdot \expval{}{\strat}(\lramultiobjquery) = \expval{}{\strat}(\lrafunc{\rewardassignment_\weightvector}) $. 
\end{theorem}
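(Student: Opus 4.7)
The plan is to first push the weighting inside the $\limsup$ at the path level and then use linearity of expectation. Fix an arbitrary strategy $\strat \in \strats{}$.

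\textbf{Step 1: Path-wise linearity of the accumulated reward.} The path reward is defined as a finite sum $\rewofpath{\rewardassignment}{\finpath} = \sum_{i=1}^{\lengthofpath{\finpath}} \rewardassignment(\tuple{\state_{i-1},\actionordur_i}, \state_i)$. Since $\rewardassignment_\weightvector$ is a finite linear combination of the $\rewardassignment_\objindex$, a direct computation gives
\[
\rewofpath{\rewardassignment_\weightvector}{\finpath} \;=\; \sum_{\objindex = 1}^{\numobj} \tupleaccess{\weightvector}{\objindex} \cdot \rewofpath{\rewardassignment_\objindex}{\finpath}
\]
for every finite prefix $\finpath = \prefixtime{\infpath}{\dur}$.

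\textbf{Step 2: Replace $\limsup$ by an attained limit.} Under \Cref{as:zeno}, every EC of $\ma$ contains a Markovian state, and it is a standard fact about MA (see~\cite{HH12,GHHKT14,BWH17}) that the $\limsup$ in \Cref{def:lra} is attained as a true limit $\probmeasure{\ma,\sinit}{\strat}$-almost surely, simultaneously for each of the finitely many reward assignments $\rewardassignment_1, \dots, \rewardassignment_\numobj, \rewardassignment_\weightvector$. Let $N \subseteq \infpaths{}$ be the $\strat$-null set on which at least one of these limits fails to exist; for $\infpath \notin N$ we may apply the standard linearity of limits to Step~1 divided by $\dur$, obtaining
\[
\lrafunc{\rewardassignment_\weightvector}(\infpath) \;=\; \lim_{\dur \to \infty} \frac{1}{\dur} \sum_{\objindex = 1}^{\numobj} \tupleaccess{\weightvector}{\objindex} \cdot \rewofpath{\rewardassignment_\objindex}{\prefixtime{\infpath}{\dur}} \;=\; \sum_{\objindex = 1}^{\numobj} \tupleaccess{\weightvector}{\objindex} \cdot \lrafunc{\rewardassignment_\objindex}(\infpath).
\]

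\textbf{Step 3: Linearity of expectation.} Taking expectations of both sides with respect to $\probmeasure{\ma,\sinit}{\strat}$ and using that $N$ is null and each $\lrafunc{\rewardassignment_\objindex}$ is bounded (hence integrable) under \Cref{as:zeno} and \Cref{as:finite}, linearity of the Lebesgue integral yields
\[
\expval{}{\strat}(\lrafunc{\rewardassignment_\weightvector}) \;=\; \sum_{\objindex = 1}^{\numobj} \tupleaccess{\weightvector}{\objindex} \cdot \expval{}{\strat}(\lrafunc{\rewardassignment_\objindex}) \;=\; \weightvector \cdot \expval{}{\strat}(\lramultiobjquery).
\]

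The only non-routine step is Step~2: $\limsup$ is in general only subadditive, so the path-wise identity really relies on the limits existing almost surely, which is precisely what the non-Zenoness assumption secures via the MA-specific limit-attainment argument. The remaining manipulations are pure linearity.
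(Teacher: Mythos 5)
Your proof is correct and follows essentially the same route as the paper's: almost-sure existence of the limits via \Cref{as:zeno}, path-wise linearity of the accumulated reward and of the resulting limits, and then linearity of the Lebesgue integral. The only differences are presentational---you spell out the finite-prefix linearity and the exceptional null set explicitly, which the paper leaves implicit.
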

\begin{proof}
	Due to \Cref{as:zeno} we have for almost all paths $\infpath \in \infpaths{}$ that for all $\objindex \in \set{1, \dots, \numobj}$ the limit $\lim_{\dur\to \infty} \frac{1}{ \dur} \cdot \rewofpath{\rewardassignment_\objindex}{\prefixtime{\infpath}{\dur}}$ exists and
$$
\sum_{\objindex = 1}^{\numobj} \tupleaccess{\weightvector}{\objindex} \cdot	\lrafunc{\rewardassignment_\objindex}(\infpath)
% = \sum_{\objindex = 1}^{\numobj} \tupleaccess{\weightvector}{\objindex} \cdot \lim_{\dur\to \infty} \frac{1}{ \dur} \cdot \rewofpath{\rewardassignment_\objindex}{\prefixtime{\infpath}{\dur}}
 = \lim_{\dur\to \infty} \frac{1}{ \dur} \cdot \sum_{\objindex = 1}^{\numobj} \tupleaccess{\weightvector}{\objindex} \cdot  \rewofpath{\rewardassignment_\objindex}{\prefixtime{\infpath}{\dur}}
% = \lim_{\dur\to \infty} \frac{1}{ \dur} \cdot \rewofpath{\rewardassignment_\weightvector}{\prefixtime{\infpath}{\dur}}
= \lrafunc{\rewardassignment_\weightvector}(\infpath).
 $$
The theorem follows with
$$
%\weightvector \cdot \expval{}{\strat}(\lramultiobjquery) =
\sum_{\objindex = 1}^{\numobj} \tupleaccess{\weightvector}{\objindex} \cdot \expval{}{\strat}(\lrafunc{\rewardassignment_\objindex})
% = \sum_{\objindex = 1}^{\numobj} \tupleaccess{\weightvector}{\objindex} \cdot \int_{\infpath} \lrafunc{\rewardassignment_\objindex} \diff \probmeasure{}{\strat}(\infpath)
 = \int_{\infpath}\ \sum_{\objindex = 1}^{\numobj} \tupleaccess{\weightvector}{\objindex} \cdot \lrafunc{\rewardassignment_\objindex} \diff \probmeasure{}{\strat}(\infpath)
=\expval{}{\strat}(\lrafunc{\rewardassignment_\weightvector}).
 $$
% \expval{}{\strat}(\lrafunc{\rewardassignment_\weightvector})
\end{proof}
Due to \Cref{thm:lra}, it suffices to consider the expected LRA reward for the \emph{single} reward assignment $\rewardassignment_\weightvector$.
The supremum $\sup \set{\expval{}{\strat}(\lrafunc{\rewardassignment_\weightvector}) \mid \strat \in \strats{}}$ is attained by some memoryless deterministic strategy $\strat_\weightvector \in \stratsmd{}$~\cite{GTHRS14}.
Such a strategy and the induced value $v_\weightvector = \expval{}{\strat_\weightvector}(\lrafunc{\rewardassignment_\weightvector})$ % = \sup \set{\expval{}{\strat}(\lrafunc{\rewardassignment_\weightvector}) \mid \strat \in \strats{}}$ 
can be computed (or approximated) with \emph{linear programming}~\cite{GTHRS14}, \emph{strategy iteration}~\cite{KM17} or \emph{value iteration}~\cite{BWH17,ACDKM17}.

\subsection{A Two-phase Approach for Single-objective LRA}
The computation of single-objective expected LRA rewards for reward assignment $\rewardassignment_\weightvector$ can be divided in two phases~\cite{GHHKT14,BWH17,ACDKM17}.
First, each maximal end component $\component \in \mecs{\ma}$ is analyzed individually by computing
for sub-MA $\submodel{\ma}{\component}$ and some\footnote{The value $v_\component$ does not depend on the selected state $\state$. Intuitively, this is because any other state $\state' \in \statesof{\component}$ can be reached from $\state$ almost surely.} $\state \in \statesof{\component}$ the value 
 $v_\component = \max \{ \expval{\submodel{\ma}{\component},\state}{\strat}(\lrafunc{\rewardassignment_\weightvector}) \mid \strat \in \stratsmd{\submodel{\ma}{\component}}\}$.

Secondly, we consider a quotient model $\ma' = \quotient{\ma}{\mecs{\ma}}$ of $\ma$ that replaces the states of each $\component \in \mecs{\ma}$ by a single state.
% in which a strategy decides whether to stay in the corresponding EC or whether an exiting transition should be taken.

\begin{definition}\label{def:ecq}
For $\ma = \matuple$ and a set of ECs $\componentset$, the \emph{quotient} is the MA $\quotient{\ma}{\componentset} = \tuple{\quotient{\states}{\componentset}, \quotient{\actions}{\componentset}, \quotient{\transitions}{\componentset},\quotient{\probabilities}{\componentset}}$ where
\begin{itemize}
\item $\quotient{\states}{\componentset} = \left(\states \setminus \bigcup_{\component \in \componentset} \statesof{\component}\right) \uplus \componentset \uplus  \set{\state_{\staysign}}$,
%\item
$\quotient{\actions}{\componentset} = \actions \uplus \left(\bigcup_{\component \in \componentset} \exitsof{\component}\right) \uplus \set{\staysign}$,
\item
$\quotient{\transitions}{\componentset}(\hat{\state}) =
\begin{cases}
\transitions(\hat{\state}) & \text{ if } \hat{\state} \in \states \\
\exitsof{\hat{\state}} \cup \set{\staysign} & \text{ if } \hat{\state} \in \componentset\\
1 & \text{ if } \hat{\state} = \state_{\staysign} \text{, and}
\end{cases}$
\item
% for $\componentelem \in \ms[\quotient{\ma}{\componentset}] \cup \sa[\quotient{\ma}{\componentset}] \colon
$\quotient{\probabilities}{\componentset}(\componentelem) =
\begin{cases}
\probabilities(\componentelem) & \text{ if } \componentelem \in \ms[\ma] \cup \sa[\ma]\\
\probabilities(\tuple{\state,\action}) & \text{ if } \componentelem = \tuple{\component, \tuple{\state,\action}} \text{ for } \component \in \componentset \text{ and } \tuple{\state,\action} \in \exitsof{\component}\\
\set{\state_\staysign \mapsto 1} &\text{ if } \componentelem \in \componentset \times \set{\staysign} \cup \set{\state_{\staysign}}
\end{cases}$
\end{itemize}
\end{definition}
%Let be the quotient for all maximal ECs.
Intuitively, selecting action $\staysign$ at a state $\component \in \mecs{\ma}$ in $\ma'$ reflects any strategy of $\ma$ that upon visiting the EC $\component$ will stay in this EC forever.
We can thus mimic any strategy of the sub-MA $\submodel{\ma}{\component}$, in particular a memoryless deterministic strategy that maximizes the expected value of $\lrafunc{\rewardassignment_\weightvector}$ in $\submodel{\ma}{\component}$.
Contrarily, selecting an action $\tuple{\state, \action}$ at a state $\component$ of $\ma'$ reflects a strategy of $\ma$ that upon visiting the EC $\component$ enforces that the states of $\component$ will be left via the exiting state-action pair $\sapair$.
Let $\rewardassignment^\ast$ be the reward assignment for $\ma'$ that yields $\rewardassignment^\ast(\tuple{\component, \staysign}, \state_{\staysign}) = v_\component$ and 0 in all other cases.
%, where $v_\component$ is the value computed for $\component \in \mecs{\ma}$.
It can be shown that
$\max \{\expval{\ma,\sinit}{\strat}(\lrafunc{\rewardassignment_\weightvector}) \mid \strat \in \strats{\ma} \} = \max \{ \expval{\ma',\sinit'}{\strat}(\totfunc{\rewardassignment^\ast}) \mid \strat \in \strats{\ma'} \}$, where $\sinit' = \component_\mathit{I}$ if  $\sinit$ is contained in some $\component_\mathit{I} \in \mecs{\ma}$ and $\sinit' = \sinit$ otherwise.

The maximal total reward in $\ma'$ can be computed using standard techniques such as
%More precisely, since $\set{\state_{\staysign}}$ is the only EC in $\ma'$, we can apply algorithms designed for \emph{goal-bounded} total rewards with goal state set $\set{\state_{\staysign})}$, for example 
\emph{value iteration} and \emph{policy iteration}~\cite{Put94} as well as the more recent \emph{sound value iteration} and \emph{optimistic value iteration}~\cite{QK18,HK20}.
The latter two provide sound precision guarantees for the output value $v$, \ie $|v -  \max \{ \expval{\ma',\sinit'}{\strat}(\totfunc{\rewardassignment^\ast}) \mid \strat \in \strats{\ma'} \}| \le \varepsilon$ for a given $\varepsilon > 0$.

\subsection{Combining Long-run Average and Total Rewards}
\label{sec:lratot}

We now consider arbitrary combinations of total- and long-run average reward objectives $\multiobjquery = \tuple{\totfunc{\rewardassignment_1}, \dots, \totfunc{\rewardassignment_\numtotobj}, \lrafunc{\rewardassignment_{\numtotobj + 1}}, \dots, \lrafunc{\rewardassignment_\numobj} }$ with $0 < \numtotobj < \numobj$.

The above-mentioned procedure for LRA reduces the analysis to an expected total reward computation on the quotient model $\quotient{\ma}{\mecs{\ma}}$.
This approach suggests to also incorporate other total-reward objectives for $\ma$ in the quotient model.
However, special care has to be taken concerning total rewards collected within ECs of $\ma$ that would no longer be present in the quotient $\quotient{\ma}{\mecs{\ma}}$.
We discuss how to deal with this issue by considering the quotient only for ECs in which no (total) reward is collected.
We start with restricting the (total) rewards that might be assigned to transitions within EC.
\begin{assumption}[Sign-Consistency]\label{as:sign}
	For all total reward objectives $\totfunc{\rewardassignment_\objindex}$ either $\fa{\component \in \mecs{\ma}} \rewardassignment_\objindex(\component) \ge 0 $ or $\fa{\component \in \mecs{\ma}} \rewardassignment_\objindex(\component) \le 0 $.
\end{assumption}
The assumption implies that paths on which infinitely many positive \emph{and} infinitely many negative reward is collected have probability 0.
One consequence is that the limit in \Cref{def:tot} exists for almost all paths~\cite{BBDGS18}.
A discussion on objectives $\totfunc{\rewardassignment_\objindex}$ that violate \Cref{as:sign} for single-objective MDP is given in~\cite{BBDGS18}.
Their multi-objective treatment is left for future work.

When \Cref{as:finite,as:sign} hold, we get $\rewardassignment_\objindex(\component) \le 0$ for all objectives $\totfunc{\rewardassignment_i}$ and EC $\component$.
Put differently, all non-zero total rewards collected in an EC have to be negative.
Strategies that induce a total reward of $ -\infty$ for some objective $\totfunc{\rewardassignment_i}$ will not be taken into account for the set of achievable points. Therefore, transitions within ECs that yield negative reward should only be taken finitely often.
These transitions can be disregarded when computing the expected LRA rewards, \ie only the  0-ECs~\cite{BBDGS18} are relevant for the LRA computation.
\begin{definition}\label{def:0ec}
	A 0-EC of $\ma$ and $\rewardassignment_1, \dots, \rewardassignment_\numtotobj$ is an EC $\component$ of $\ma$ with $\rewardassignment_i(\component) = 0$ for all $\rewardassignment_i$.
	The set of maximal 0-ECs is given by $\zeromecs{\ma}{\rewardassignment_1, \dots, \rewardassignment_i}$.
\end{definition}
$\zeromecs{\ma}{\rewardassignment_1, \dots, \rewardassignment_\numtotobj}$ can be computed by constructing the maximal ECs of the sub-MA of $\ma$ where transitions with a non-zero reward are erased.
%If there is no such EC, every strategy will yield a total reward of $-\infty$ for at least one objective, yielding $\ach{}{\multiobjquery} = \emptyset.$

We are ready to describe our approach that combines LRA rewards of 0-ECs and the remaining total rewards into a single total-reward objective.
Let $\rewardassignment_\weightvector^\mathit{tot}$ and $\rewardassignment_\weightvector^\mathit{lra}$ be reward assignments with
$\rewardassignment_\weightvector^\mathit{tot}(\tuple{\state, \actionordur}, \state') =  \sum_{i = 1}^{\numtotobj} \tupleaccess{\weightvector}{i} \cdot \rewardassignment_i(\tuple{\state, \actionordur}, \state')$ and
$\rewardassignment_\weightvector^\mathit{lra}(\tuple{\state, \actionordur}, \state') =  \sum_{j = \numtotobj}^{\numobj} \tupleaccess{\weightvector}{j} \cdot \rewardassignment_j(\tuple{\state, \actionordur}, \state')$.
Moreover, for $\infpath \in \infpaths{}$ we set $(\totfunc{\rewardassignment_\weightvector^\mathit{tot}} + \lrafunc{\rewardassignment_\weightvector^\mathit{lra}})(\infpath) = \totfunc{\rewardassignment_\weightvector^\mathit{tot}}(\infpath) + \lrafunc{\rewardassignment_\weightvector^\mathit{lra}}(\infpath)$.
%The following theorem provides correctness of our approach.
\begin{theorem}\label{thm:mixed}
$\fa{\strat \in \strats{}}\weightvector \cdot \expval{}{\strat}(\multiobjquery) = \expval{}{\strat}(\totfunc{\rewardassignment_\weightvector^\mathit{tot}} + \lrafunc{\rewardassignment_\weightvector^\mathit{lra}}) $. 
\end{theorem}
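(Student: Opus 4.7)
The plan is to reduce the claim to \Cref{thm:lra} (for the long-run average part) together with an analogous pathwise identity for the total-reward part, glued together by linearity of expectation. Because the weight vector~$\weightvector$ has nonnegative entries summing to one and the expectation operator is linear in its argument, we can decompose
\[
\weightvector \cdot \expval{}{\strat}(\multiobjquery) \;=\; \sum_{i=1}^{\numtotobj} \tupleaccess{\weightvector}{i} \cdot \expval{}{\strat}(\totfunc{\rewardassignment_i}) \;+\; \sum_{j=\numtotobj+1}^{\numobj} \tupleaccess{\weightvector}{j} \cdot \expval{}{\strat}(\lrafunc{\rewardassignment_j}).
\]
For the second sum, \Cref{thm:lra} applied to the subtuple $\tuple{\lrafunc{\rewardassignment_{\numtotobj+1}},\dots,\lrafunc{\rewardassignment_\numobj}}$ directly yields $\expval{}{\strat}(\lrafunc{\rewardassignment_\weightvector^\mathit{lra}})$, so only the total-reward part requires new work.

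For the first sum I would follow the same two-step recipe as in the proof of \Cref{thm:lra}: first establish a pathwise identity on a set of paths of full measure, then integrate. Concretely, \Cref{as:sign} implies that along almost every infinite path the partial sums $\rewofpath{\rewardassignment_i}{\prefixsteps{\infpath}{\stepbound}}$ are eventually monotone in $\stepbound$ (since after a finite prefix the path is trapped in some EC, in which the rewards all have the same sign), so $\lim_{\stepbound\to\infty}\rewofpath{\rewardassignment_i}{\prefixsteps{\infpath}{\stepbound}}$ exists in $\rr \cup \set{-\infty}$ almost surely. Furthermore, \Cref{as:finite} rules out the value $+\infty$, and together with $\tupleaccess{\weightvector}{i} \ge 0$ this allows the finite weighted sum to commute with the (now genuine) limit. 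This gives $\sum_{i=1}^{\numtotobj} \tupleaccess{\weightvector}{i} \cdot \totfunc{\rewardassignment_i}(\infpath) = \totfunc{\rewardassignment_\weightvector^\mathit{tot}}(\infpath)$ for almost all $\infpath$, and then linearity of the Lebesgue integral (applied as in \Cref{thm:lra}) turns this pathwise identity into $\sum_{i=1}^{\numtotobj} \tupleaccess{\weightvector}{i} \cdot \expval{}{\strat}(\totfunc{\rewardassignment_i}) = \expval{}{\strat}(\totfunc{\rewardassignment_\weightvector^\mathit{tot}})$.

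Finally, combining the two parts and applying linearity one last time to the defined pathwise sum $\totfunc{\rewardassignment_\weightvector^\mathit{tot}} + \lrafunc{\rewardassignment_\weightvector^\mathit{lra}}$ closes the argument. The main obstacle I expect is the careful bookkeeping around the possibility that $\expval{}{\strat}(\totfunc{\rewardassignment_i}) = -\infty$ for some strategies: \Cref{as:finite} rules out $+\infty$ but not $-\infty$ (cf.\ \Cref{rem:finiteness}), so one must argue that either such strategies are harmless because $\tupleaccess{\weightvector}{i} \ge 0$ keeps the extended-real arithmetic well-defined, or restrict attention to strategies yielding finite values. Once this is settled, the remaining steps are standard applications of linearity and of dominated/monotone convergence hidden inside the invocation of \Cref{thm:lra}.
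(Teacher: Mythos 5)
Your proposal is correct and follows essentially the same route as the paper's proof, which simply splits $\weightvector \cdot \expval{}{\strat}(\multiobjquery)$ into the total-reward and LRA sums and invokes ``similar reasoning as in the proof of \Cref{thm:lra}'' to collapse each into a single reward assignment before recombining by linearity. In fact you supply more detail than the paper does, in particular the justification via \Cref{as:sign} that the total-reward limits exist almost surely and the bookkeeping around possible $-\infty$ values, both of which the paper leaves implicit.
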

\begin{proof}
Using a similar reasoning as in the proof of \Cref{thm:lra}, we get:
\vspace{-8pt}
\begin{align*}
\weightvector \cdot \expval{}{\strat}(\multiobjquery)
 &= \Big(\sum_{i=1}^{\numtotobj} \tupleaccess{\weightvector}{i} \cdot \expval{}{\strat}(\totfunc{\rewardassignment_i})\Big) + \Big(\sum_{j=\numtotobj + 1}^{\numobj} \tupleaccess{\weightvector}{j} \cdot \expval{}{\strat}(\lrafunc{\rewardassignment_j})\Big) \\
&= \expval{}{\strat}(\totfunc{\rewardassignment_\weightvector^\mathit{tot}}) + \expval{}{\strat}(\lrafunc{\rewardassignment_\weightvector^\mathit{lra}})
= \expval{}{\strat}(\totfunc{\rewardassignment_\weightvector^\mathit{tot}} + \lrafunc{\rewardassignment_\weightvector^\mathit{lra}}).
%\tuple{\tupleaccess{\weightvector}{1}, \dots, \tupleaccess{\weightvector}{\numtotobj}} \cdot \expval{}{\strat}(\tuple{\totfunc{\rewardassignment_{1}}, \dots, \totfunc{\rewardassignment_\numtotobj}})
\end{align*}
\end{proof}
\Cref{alg:weighted} outlines the procedure for solving the weighted sum optimization problem.
It first computes optimal LRA rewards and inducing strategies for each maximal 0-EC (Lines \ref{alg:line:ecstart} to \ref{alg:line:ecend}).
Then, a quotient model $\ma^\ast$ and a reward assignment $\rewardassignment^\ast$ incorporating all total- and LRA rewards is build and analyzed (Lines \ref{alg:line:qstart} to \ref{alg:line:qend}).
%In \Cref{alg:line:valuecomp} the quotient model is analyzed.
$\ma^\ast$ might still contain ECs other than $\set{\state_{\staysign}}$.
Those ECs shall be left eventually to avoid collecting infinite negative reward for a total reward objective $\totfunc{\rewardassignment_i}$. 
Note that the weight $\tupleaccess{\weightvector}{i}$ for such an objective might be zero, \ie the rewards of $\rewardassignment_i$ are not present in $\rewardassignment^\ast$.
It is therefore necessary to explicitly restrict the analysis to strategies  that almost surely (\ie with probability 1) reach $\state_{\staysign}$.
To compute the maximal expected total reward in \Cref{alg:line:valuecomp} with, \eg standard value iteration, we can consider another quotient model for $\ma^*$ and the 0-ECs of $\ma^*$ and $\rewardassignment^\ast$.
%  $\zeromecs{\ma^*}{\rewardassignment^\ast} \setminus \set{\set{\state_{\staysign}}}$.
In contrast to \Cref{def:ecq}, this quotient should not introduce the $\staysign$ action since it shall not be possible to remain in an EC forever.
In \Cref{alg:line:stratcomp}, the strategies for the 0-ECs and for the quotient $\ma^\ast$ are combined into one strategy $\strat_\weightvector$ for $\ma$.
Here, $\strat_{\component \lozenge \state'}$ refers to a strategy of $\submodel{\ma}{\component}$ for which every state $\state \in \statesof{\component}$ eventually reaches $\state' \in \statesof{\component}$ almost surely.

\begin{algorithm}[t]
	\Input{MA $\ma$ with initial state $\sinit$, objectives $\multiobjquery= \tuple{\totfunc{\rewardassignment_1}, \dots, \totfunc{\rewardassignment_\numtotobj}, \lrafunc{\rewardassignment_{\numtotobj + 1}}, \dots, \lrafunc{\rewardassignment_\numobj} }$, weight vector $\weightvector$ }
	\Output{Value $v_\weightvector$, strategy $\strat_\weightvector$ as in the weighted sum optimization problem}
$\componentset \gets \zeromecs{\ma}{\rewardassignment_1, \dots, \rewardassignment_i}$
\label{alg:line:ecstart}\tcp{Compute maximal 0-ECs and their LRA.}
%$\rewardassignment_\weightvector^\mathit{lra} \gets \weightedrewardassignment{\tuple{\tupleaccess{\weightvector}{\numtotobj +1}, \dots, \tupleaccess{\weightvector}{\numobj}}}{\rewardassignment_{\numtotobj + 1}, \dots, \rewardassignment_\numobj}$\\
\ForEach{$\component \in \componentset$}{
	Compute $v_\component = \max \set{ \expval{\submodel{\ma}{\component}}{\strat}(\lrafunc{\rewardassignment_\weightvector^\mathit{lra}}) \mid \strat \in \stratsmd{\submodel{\ma}{\component}}}$ 
	\linebreak
	and inducing strategy $\strat_\component \in \stratsmd{\submodel{\ma}{\component}}$\label{alg:line:ecend}
}
$\ma^\ast \gets \quotient{\ma}{\componentset}$ \label{alg:line:qstart}\tcp{Build and analyze quotient model.}
%$\rewardassignment_\mathit{tot} \gets \weightedrewardassignment{\tuple{\tupleaccess{\weightvector}{1}, \dots, \tupleaccess{\weightvector}{\numtotobj}}}{\rewardassignment_{1}, \dots, \rewardassignment_\numtotobj}$\\
Build reward assignment $\rewardassignment^\ast$ with 
$\rewardassignment^\ast(\tuple{\state, \actionordur}, \state') =
\begin{cases}
 v_\component & \text{if } \state = \component, \actionordur=\staysign, \text{ and } \state' = \state_\staysign\\
 \rewardassignment_\weightvector^\mathit{tot}(\tuple{\hat{\state}, \action}, \state') & \text{if } \state= \component, \actionordur = \tuple{\hat{\state}, \action} \in \exitsof{\component}\\
 \rewardassignment_\weightvector^\mathit{tot}(\tuple{\state, \action}, \state') & \text{otherwise}
\end{cases}
$\\
Compute $v_\weightvector = \max \set{ \expval{\ma^\ast}{\strat}\!(\totfunc{\rewardassignment^\ast}) \,\middle|\, {\strat \in \stratsmd{\ma^\ast},\ \probmeasure{\ma^\ast}{\strat}\!(\lozenge \set{\state_{\staysign}}) = 1} }$ \linebreak 
and inducing strategy $\strat^\ast \in \stratsmd{\ma^\ast}$
\label{alg:line:valuecomp}\label{alg:line:qend}\\
Build strategy $\strat_\weightvector \in \stratsmd{\ma}$ with \linebreak
$\strat_\weightvector(\state) =
\begin{cases}
\strat_\component(\state) & $\text{if } $\ex{\component \in \componentset} \state \in \statesof{\component} \text{ and } \strat^{\ast}(\component \in \componentset) = \staysign\\
\action & $\text{if } $\ex{\component\in \componentset} \state \in \statesof{\component} \text{ and } \strat^{\ast}(\component) = \tuple{\state,\action}\\
\strat_{\component \lozenge \set{\state'}}(\state) & $\text{if } $\ex{\component\in \componentset} \state \in \statesof{\component} \text{ and } \strat^{\ast}(\component) = \tuple{\state',\action} \text{ for } \state' \neq \state\\
\strat^\ast(\state) & \text{otherwise}
\end{cases}
$\label{alg:line:stratcomp}
	%		\label{line:refineloop:end}
\caption{Optimizing the weighted sum for total and LRA objectives}
\label{alg:weighted}
\end{algorithm}

Since \Cref{alg:weighted} produces a memoryless deterministic strategy $\strat_\weightvector$, the point $\point_\weightvector\in\pointdomain$ in \Cref{alg:line:inducedpoint} of \Cref{alg:refinementloop} can be computed on the induced sub-MA for $\strat_\weightvector$.
Assuming exact single-objective solution methods, the resulting value $v_\weightvector$ and strategy $\strat_\weightvector \in \stratsmd{\ma}$ of \Cref{alg:weighted} satisfy $v_\weightvector = \weightvector \cdot \expval{}{\strat_\weightvector}(\multiobjquery)$, yielding an exact solution to the weighted sum optimization problem.
As the number of memoryless deterministic strategies is bounded, we conclude the following,
extending results for pure LRA queries~\cite{BBCFK14} to mixtures with total rewards.
\begin{corollary}
For total- and LRA reward objectives $\multiobjquery$, $\ach{}{\multiobjquery}$ is closed and is the downward convex hull of at most $|\stratsmd{\ma}| = \prod_{\state \in \ps} |\transitions(\state)|$ points.
\end{corollary}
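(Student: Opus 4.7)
The plan is to characterize $\ach{}{\multiobjquery}$ as the downward convex hull of the finite set
$\points_\mathrm{md} = \set{\expval{}{\strat}(\multiobjquery) \mid \strat \in \stratsmd{\ma}}$
of points realized by memoryless deterministic strategies. The key input is the statement just before the corollary: for every weight vector $\weightvector \in (\rrnn)^\numobj$, \Cref{alg:weighted} returns some $\strat_\weightvector \in \stratsmd{\ma}$ with $\weightvector \cdot \expval{}{\strat_\weightvector}(\multiobjquery) = \sup \set{\weightvector \cdot \expval{}{\strat}(\multiobjquery) \mid \strat \in \strats{\ma}}$, so the weighted-sum optimum is always attained by some MD strategy.

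For the inclusion $\downset{\points_\mathrm{md}} \subseteq \ach{}{\multiobjquery}$, each point of $\points_\mathrm{md}$ is achievable by definition, and $\ach{}{\multiobjquery}$ is downward-closed by \Cref{def:achpar} and convex by \Cref{lem:conv}, hence contains the downward convex hull of $\points_\mathrm{md}$. For the reverse inclusion I would argue by contradiction. Since $\stratsmd{\ma}$ is finite with $|\stratsmd{\ma}| = \prod_{\state \in \ps} |\transitions(\state)|$ elements, $\points_\mathrm{md}$ is finite, so $\downset{\points_\mathrm{md}}$ is a closed, convex, downward-closed subset of $\pointdomain$. If some $\point \in \ach{}{\multiobjquery}$ were outside $\downset{\points_\mathrm{md}}$, the separating hyperplane theorem would yield a non-zero $\weightvector \in \rr^\numobj$ and a constant $c$ with $\weightvector \cdot \point > c > \sup \set{\weightvector \cdot \vec{q} \mid \vec{q} \in \downset{\points_\mathrm{md}}}$. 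Downward-closedness forces every coordinate of $\weightvector$ to be non-negative (otherwise the supremum on the right would be $+\infty$), so after normalizing $\weightvector$ to satisfy $\sum_\objindex \tupleaccess{\weightvector}{\objindex} = 1$ the MD-attainment statement above gives $\sup \set{\weightvector \cdot \vec{q} \mid \vec{q} \in \downset{\points_\mathrm{md}}} = \sup \set{\weightvector \cdot \expval{}{\strat}(\multiobjquery) \mid \strat \in \strats{\ma}}$, contradicting \Cref{lem:supremumbounds} applied to $\point$.

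Closedness of $\ach{}{\multiobjquery}$ and the cardinality bound then follow at once: the downward convex hull of a finite point set is closed in $\pointdomain$, and $|\points_\mathrm{md}| \le |\stratsmd{\ma}|$. The delicate step, and where I expect the main difficulty, is the justification that the separating normal can be chosen coordinate-wise non-negative and non-zero. The argument above relies on the simple observation that for a downward-closed set any separating functional with a negative coordinate is unbounded above on the set, while strict separation of $\point$ from the non-empty set $\downset{\points_\mathrm{md}}$ furnishes $\weightvector \ne \vec{0}$, making the normalization legitimate.
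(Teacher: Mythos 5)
Your proof is correct and follows essentially the same route as the paper, which derives the corollary directly from the facts that the weighted-sum optimization problem is solved exactly by a memoryless deterministic strategy and that there are only $\prod_{\state \in \ps} |\transitions(\state)|$ such strategies. The separating-hyperplane argument (including the observation that downward-closedness forces a non-negative normal) is exactly the standard completion that the paper leaves implicit, so you have simply made explicit what the authors assert in one sentence.
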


\begin{remark}\label{rem:others}
Our framework can be extended to support objectives beyond total- and LRA rewards.
\emph{Minimizing objectives} where one is interested in a strategy $\strat$ that induces a \emph{small} expected value can be considered by multiplying all rewards with $-1$.
Since we already allow negative values in reward assignments, no further adaptions are necessary.
We emphasize that our framework lifts a restriction imposed in~\cite{FKP12} that disabled a simultaneous analysis of maximizing \emph{and} minimizing total reward objectives.
\emph{Reachability probabilities} can be transformed to expected total rewards on a modified model in which the information whether a goal state has already been visited is stored in the state-space. 
\emph{Goal-bounded} total rewards as in~\cite{GTHRS14}, where no further rewards are collected as soon as one of the goal states is reached can be transformed similarly.
%
%\emph{Bounded reachability probabilities} ask for the probability to reach a goal state within a given step-, time-, or reward-bound.
 For MDP, \emph{step- and reward-bounded} reachability probabilities can be converted to total reward objectives by unfolding the current amount of steps (or rewards) into the state-space of the model.
Approaches that avoid such an expensive unfolding have been presented in \cite{FKP12} for objectives with step-bounds and in~\cite{HJKQ18,HJKQ20} for objectives with one or multiple reward-bounds. 
\emph{Time-bounded} reachability probabilities for MA have been considered in~\cite{QJK17}. 
%All these approaches fit well into the framework for total- and LRA reward objectives.
%
Finally, $\omega$-regular specifications such as \emph{linear temporal logic (LTL)} formulae have been transformed to total reward objectives in~\cite{FKNPQ11}.
However, the optimization of LRA rewards within the ECs of the model might interfere with the satisfaction of one or more $\omega$-regular specifications~\cite{GZ18}.
%Investigating these intricacies is left for future work.
\end{remark}

\section{Experimental Evaluation}

\paragraph{Implementation details}
Our approach has been implemented in the model checker \storm~\cite{HJKQV20}.
Given an MA or MDP (specified using the \prism language or \tool{JANI}~\cite{BDHHJT17}), the tool  answers qualitative- and quantitative achievability as well as Pareto queries.
Beside of mixtures of total- and LRA reward objectives, \storm{} also supports most of the extensions in~\Cref{rem:others}---with the notable exception of LTL.
We use LRA value iteration~\cite{BWH17,ACDKM17} and sound value iteration~\cite{QK18} for calls to single-objective model checking. Both provide sound precision guarantees, \ie the relative error of these computations is at most $\varepsilon$, where we set $\varepsilon = 10^{-6}$.

\paragraph{Workstation cluster}
To showcase the capabilities of our implementation, we present a workstation cluster---originally considered in~\cite{HHK00} as a CTMC---now modeled as an MA.
The cluster considers two sub-clusters each consisting of one \emph{switch} and $N$ \emph{workstations}.
Within each sub-cluster the workstations are connected to the switch in a star topology and the two switches are connected with a \emph{backbone}.
Each of the components may fail with a certain rate.
A controller can (i) acquire additional repair units (up to $M$) and (ii) control the movements of the repair units.
In \Cref{fig:cluster} we depict the resulting sets of achievable points---as computed by our implementation---for $N=16$ and $M=4$. 
As objectives, we considered the long-run average number of operating workstations $\lrafunc{\rewardassignment_{\#\mathit{op}}}$, the long-run average probability that at least $N$ workstations are operational $\lrafunc{\rewardassignment_{\#\mathit{op}\ge N}}$, and the total number of acquired repair units $\totfunc{\rewardassignment_{\#\mathit{rep}}}$.

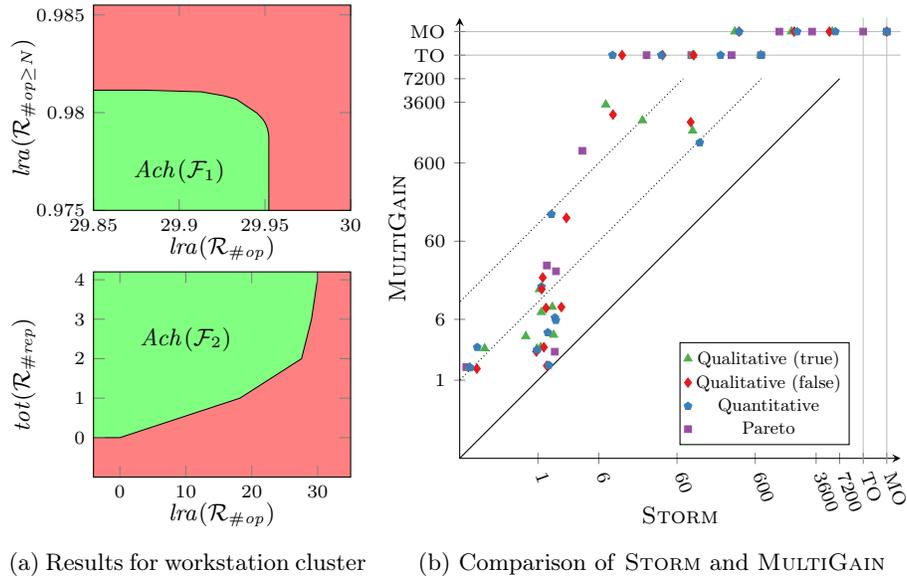
\begin{figure}[t]
	\begin{subfigure}[b]{0.39\linewidth}
		\centering
		\begin{tikzpicture}[scale=1]
			\path[use as bounding box,draw=white] (-1,-0.75) rectangle (3.5,2.75);
			\begin{axis}[
				width=5cm,
				xlabel={$\lrafunc{\rewardassignment_{\#\mathit{op}}}$},
				xmin=29.85,
				xmax = 30,
				xtick={29.85,29.9,29.95,30},
				ylabel={$\lrafunc{\rewardassignment_{\#\mathit{op}\ge N}}$},
				ylabel style={yshift=-0.3cm},
				xlabel style={yshift=0.3cm},
				yticklabel style={font=\scriptsize,/pgf/number format/fixed, /pgf/number format/precision=3},
				xticklabel style={font=\scriptsize},
				ymin=0.975,
				ymax=0.9855,
				ytick={0.975,0.98,0.985},
				axis background/.style={fill=red!50},
				axis on top
				]
				
				\addplot[fill=green!50, very thin] coordinates {
					
					(29.8       , 0.9811460752 )
					(29.88057587, 0.9811460752 )
					(29.9129114, 0.9810632059 )
					(29.92598346, 0.9808597902 )
					(29.93325521, 0.980684152 )
					(29.94538879, 0.9799737714 )
					(29.94891907, 0.9796436213 )
					(29.95006632, 0.9794808775 )				
					(29.95142501, 0.9792010675 )
					(29.95209059, 0.9789202222 )
					(29.95226485, 0.9787210912 )
					(29.95226485, 0.97 )
					(29.8, 0.97 )
				} -- cycle;
				
				%			\addplot[blue, ultra thick] coordinates  {(3,0) (4,-2)} ;
				%			\node[point,label=0:$\point_1$] at (axis cs:4,-2) {};
				%			\node[point,label=90:$\point_2$] at (axis cs:3,0) {};
				\node[] at (axis cs:29.9,0.977) {$\ach{}{\multiobjquery_1}$};
				
			\end{axis} 
		\end{tikzpicture}\\
		%\end{subfigure}
		%\begin{subfigure}[b]{0.5\linewidth}
		%	\centering
		\begin{tikzpicture}[scale=1]
			\path[use as bounding box,draw=white] (-1,-0.75) rectangle (3.5,2.75);
			\begin{axis}[
				width=5cm,
				xlabel={$\lrafunc{\rewardassignment_{\#\mathit{op}}}$},
				xmin=-4,
				xmax = 35,
				xtick={0,10,20,30},
				ylabel={$\totfunc{\rewardassignment_{\#\mathit{rep}}}$},
				ylabel style={yshift=-0.3cm},
				xlabel style={yshift=0.3cm},
				yticklabel style={font=\scriptsize,/pgf/number format/fixed, /pgf/number format/precision=3},
				xticklabel style={font=\scriptsize},
				ymin=-1,
				ymax=4.2,
				ytick={0,1,2,3,4},
				axis background/.style={fill=red!50},
				axis on top
				]
				
				\addplot[fill=green!50, very thin] coordinates {
					(29.95226485,           5 )
					(29.95226485,           4 )
					(29.06101055,           3 )
					(27.5501292,           2 )
					(18.20003388,           1 )
					(          0,           0 )
					(          -5,           0 )
					(          -5,           5 )
				} -- cycle;
				
				%			\addplot[blue, ultra thick] coordinates  {(3,0) (4,-2)} ;
				%			\node[point,label=0:$\point_1$] at (axis cs:4,-2) {};
				%			\node[point,label=90:$\point_2$] at (axis cs:3,0) {};
				\node[] at (axis cs:10,2.5) {$\ach{}{\multiobjquery_2}$};
				
			\end{axis} 
		\end{tikzpicture}%
		\caption{Results for workstation cluster}
		\label{fig:cluster}
	\end{subfigure}
	\begin{subfigure}[b]{0.61\linewidth}
		\centering
		\newlength{\scatterplotsize}
\setlength{\scatterplotsize}{\linewidth}
	\begin{tikzpicture}
					\path[use as bounding box,draw=white] (-1,-1) rectangle (6,6);
	\begin{axis}[
		width=\scatterplotsize,
		height=\scatterplotsize,
		axis equal image,
		xmin=0.1,
		ymin=0.1,
		ymax=43200,
		xmax=43200,
		xmode=log,
		ymode=log,
		axis x line=bottom,
		axis y line=left,
		xtick={1,6,60,600,3600,7200},
		xticklabels={1,6,60,600,3600,7200},
		extra x ticks = {14400,28800},
		extra x tick labels = {TO,MO},
		extra x tick style = {grid = major},
		ytick={1,6,60,600,3600,7200},
		yticklabels={1,6,60,600,3600,7200},
		extra y ticks = {14400, 28800},
		extra y tick labels = {TO,MO},
		extra y tick style = {grid = major},
		xlabel={\storm},
		ylabel={\multigain},
		ylabel style={yshift=-0.4cm},
		yticklabel style={font=\scriptsize},
		xticklabel style={rotate=290,anchor=west,font=\scriptsize},
		legend pos=south east,
		legend columns=1,
		legend style={nodes={scale=0.75, transform shape},inner sep=1.5pt,yshift=0cm,xshift=-0.5cm},
		%legend cell align={left}
		]
		
		\addplot[
		scatter,
		only marks,
		scatter/classes={
			qual1={mark=triangle*, mark size=1.75, color3},
			qual2={mark=diamond*, mark size=1.75, color2},
			quant={mark=pentagon*, mark size=1.50, color1},
			pareto={mark=square*, mark size=1.25, color4}
		},
		scatter src=explicit symbolic
		]%
		table [col sep=semicolon,x=Storm,y=Multigain,meta=scatterclass] {experiments/scatter.csv};
		\legend{Qualitative (true), Qualitative (false), Quantitative, Pareto}
		\addplot[no marks] coordinates {(0.01,0.01) (7200,7200) };
		\addplot[no marks, densely dotted] coordinates {(0.01,0.1) (720,7200)};
		\addplot[no marks, densely dotted] coordinates {(0.001,0.1) (72,7200)};
%		\addplot[no marks, densely dotted] coordinates {(0.1,0.01) (14400,1440)};
	\end{axis}
\end{tikzpicture}
		\caption{Comparison of \storm and \multigain}
		\label{fig:scatter}
	\end{subfigure}
	\caption{Exemplary results and runtime comparison with \multigain}
	\label{fig:clusterscatter}
\end{figure}
\begin{table}[t]
	\caption{Results for pure LRA Pareto queries}
	\centering
	\label{tab:paretolra}
	\setlength{\tabcolsep}{4pt}
	\aboverulesep = 0.1pt
	\belowrulesep = 0.1pt
	\scriptsize{
		%		\adjustbox{{max width=\textwidth}}{%g
		\begin{tabular}{ccccccrrrrrrrr|r}
\toprule[1.5pt]
%\multicolumn{2}{c}{Benchmark} & \multicolumn{7}{c}{Data} & \multicolumn{2}{c}{Runtime}\\
\multicolumn{1}{c}{Model} & \multicolumn{1}{c}{Par.} & \multicolumn{1}{c}{\#lra} & \multicolumn{1}{c}{$|\states|$} & \multicolumn{1}{c}{$|\ms|$} & \multicolumn{1}{c}{$|\transitions|$} & \multicolumn{1}{c}{$|\componentset|$} & \multicolumn{1}{c}{$|\states_\componentset|$} & \multicolumn{1}{c}{\#iter} & \multicolumn{1}{c}{\storm} & \multicolumn{1}{c}{\multigain}\\
\midrule[1.5pt]
\benchmark{csn} & 3 & $3$ & $177$ &  & $427$ & $38$ & $158$ & $9$ & \ensuremath{1.23} & \\
\benchmark{csn} & 4 & $4$ & $945$ &  & $2753$ & $176$ & $880$ & $30$ & \ensuremath{109} & \\
\benchmark{csn} & 5 & $5$ & $4833$ &  & $2{\cdot} 10^{4}$ & $782$ & $4622$ &  & TO & \\
\midrule
\benchmark{mut} & 3 & $2$ & $3{\cdot} 10^{4}$ &  & $5{\cdot} 10^{4}$ & $1$ & $3{\cdot} 10^{4}$ & $15$ & \ensuremath{3.7} & \ensuremath{859}\\
\benchmark{mut} & 4 & $2$ & $7{\cdot} 10^{5}$ &  & $1{\cdot} 10^{6}$ & $1$ & $7{\cdot} 10^{5}$ & $14$ & \ensuremath{91.4} & TO\\
\benchmark{mut} & 5 & $2$ & $1{\cdot} 10^{7}$ &  & $3{\cdot} 10^{7}$ & $1$ & $1{\cdot} 10^{7}$ & $12$ & \ensuremath{3197} & MO\\
\midrule
\benchmark{phi} & 4 & $2$ & $9440$ &  & $4{\cdot} 10^{4}$ & $1$ & $9440$ & $6$ & \ensuremath{1.7} & \ensuremath{24.7}\\
\benchmark{phi} & 5 & $2$ & $9{\cdot} 10^{4}$ &  & $4{\cdot} 10^{5}$ & $1$ & $9{\cdot} 10^{4}$ & $18$ & \ensuremath{24.5} & TO\\
\benchmark{phi} & 6 & $2$ & $2{\cdot} 10^{6}$ &  & $1{\cdot} 10^{7}$ & $1$ & $2{\cdot} 10^{6}$ & $12$ & \ensuremath{1221} & MO\\
\midrule
\benchmark{res} & 5-5 & $2$ & $2618$ &  & $8577$ & $1$ & $2618$ & $16$ & \ensuremath{1.64} & \ensuremath{2.31}\\
\benchmark{res} & 15-15 & $2$ & $2{\cdot} 10^{5}$ &  & $7{\cdot} 10^{5}$ & $1$ & $2{\cdot} 10^{5}$ & $3$ & \ensuremath{712} & TO\\
\benchmark{res} & 20-20 & $2$ & $8{\cdot} 10^{5}$ &  & $2{\cdot} 10^{6}$ & $1$ & $8{\cdot} 10^{5}$ & $7$ & \ensuremath{299} & TO\\
\midrule
\benchmark{sen} & 2 & $3$ & $7855$ &  & $2{\cdot} 10^{4}$ & $3996$ & $6105$ & $13$ & \ensuremath{3.41} & \\
\benchmark{sen} & 3 & $3$ & $8{\cdot} 10^{4}$ &  & $3{\cdot} 10^{5}$ & $5{\cdot} 10^{4}$ & $7{\cdot} 10^{4}$ & $14$ & \ensuremath{274} & \\
\benchmark{sen} & 4 & $3$ & $6{\cdot} 10^{5}$ &  & $3{\cdot} 10^{6}$ & $4{\cdot} 10^{5}$ & $5{\cdot} 10^{5}$ &  & TO & \\
\midrule
\benchmark{vir} & 2 & $2$ & $80$ &  & $393$ & $2$ & $66$ & $4$ & \ensuremath{${<}\,$ 1} & \ensuremath{1.47}\\
\benchmark{vir} & 3 & $2$ & $2{\cdot} 10^{4}$ &  & $2{\cdot} 10^{5}$ & $2$ & $2{\cdot} 10^{4}$ & $2$ & \ensuremath{1.3} & \ensuremath{29.3}\\
\benchmark{vir} & 4 & 2 & $4{\cdot} 10^{7}$ &  & $7{\cdot} 10^{8}$ & ? & ? & & MO & MO\\
\midrule[1.5pt]
\benchmark{clu} & 8-3 & $2$ & $2{\cdot} 10^{5}$ & $1{\cdot} 10^{5}$ & $4{\cdot} 10^{5}$ & $4$ & $2{\cdot} 10^{5}$ & $11$ & \ensuremath{287} & \\
\benchmark{clu} & 16-4 & $2$ & $2{\cdot} 10^{6}$ & $9{\cdot} 10^{5}$ & $4{\cdot} 10^{6}$ & $5$ & $2{\cdot} 10^{6}$ & $10$ & \ensuremath{4199} & \\
\benchmark{clu} & 32-3 & $2$ & $2{\cdot} 10^{6}$ & $1{\cdot} 10^{6}$ & $5{\cdot} 10^{6}$ & $4$ & $2{\cdot} 10^{6}$ &  & TO & \\
\midrule
\benchmark{dpm} & 3-3 & $2$ & $2640$ & $1008$ & $3240$ & $1$ & $2640$ & $32$ & \ensuremath{19.5} & \\
\benchmark{dpm} & 4-4 & $2$ & $3{\cdot} 10^{4}$ & $1{\cdot} 10^{4}$ & $4{\cdot} 10^{4}$ & $1$ & $3{\cdot} 10^{4}$ & $33$ & \ensuremath{1179} & \\
\benchmark{dpm} & 5-5 & $2$ & $6{\cdot} 10^{5}$ & $2{\cdot} 10^{5}$ & $7{\cdot} 10^{5}$ & $1$ & $6{\cdot} 10^{5}$ &  & TO & \\
\midrule
\benchmark{pol} & 3-3 & $2$ & $9522$ & $4801$ & $2{\cdot} 10^{4}$ & $1$ & $9522$ & $17$ & \ensuremath{3.44} & \\
\benchmark{pol} & 4-3 & $2$ & $5{\cdot} 10^{4}$ & $3{\cdot} 10^{4}$ & $1{\cdot} 10^{5}$ & $1$ & $5{\cdot} 10^{4}$ & $19$ & \ensuremath{19.2} & \\
\benchmark{pol} & 4-4 & $2$ & $8{\cdot} 10^{5}$ & $5{\cdot} 10^{5}$ & $2{\cdot} 10^{6}$ & $1$ & $8{\cdot} 10^{5}$ & $29$ & \ensuremath{3350} & \\
\midrule
\benchmark{rqs} & 2-2 & $2$ & $1619$ & $628$ & $2296$ & $1$ & $1618$ & $63$ & \ensuremath{4.52} & \\
\benchmark{rqs} & 3-3 & $2$ & $9{\cdot} 10^{4}$ & $4{\cdot} 10^{4}$ & $1{\cdot} 10^{5}$ & $1$ & $9{\cdot} 10^{4}$ & $106$ & \ensuremath{162} & \\
\benchmark{rqs} & 5-3 & $2$ & $2{\cdot} 10^{6}$ & $1{\cdot} 10^{6}$ & $4{\cdot} 10^{6}$ & $1$ & $2{\cdot} 10^{6}$ & $97$ & \ensuremath{4345} & \\
\bottomrule[1.5pt]
\end{tabular}

		%		}
	}	
\end{table}
\begin{table}[t]
	\caption{Results for Pareto queries with other objective types}
	\centering
	\label{tab:paretoother}
	\setlength{\tabcolsep}{4pt}
	\aboverulesep = 0.1pt
\belowrulesep = 0.1pt
	\scriptsize{
		%		\adjustbox{{max width=\textwidth}}{%
		\begin{tabular}{ccccccrrrrrrrr}
\toprule[1.5pt]
%\multicolumn{2}{c}{Benchmark} & \multicolumn{7}{c}{Data} & \multicolumn{1}{c}{Runtime}\\
\multicolumn{1}{c}{Model} & \multicolumn{1}{c}{Par.} & \multicolumn{1}{c}{\#lra/tot/bnd} & \multicolumn{1}{c}{$|\states|$} & \multicolumn{1}{c}{$|\ms|$} & \multicolumn{1}{c}{$|\transitions|$} & \multicolumn{1}{c}{$|\componentset|$} & \multicolumn{1}{c}{$|\states_\componentset|$} & \multicolumn{1}{c}{\#iter} & \multicolumn{1}{c}{\storm}\\
\midrule[1.5pt]
\benchmark{res} & 5-5 & 2-0-1 & $2618$ &  & $8577$ & $1$ & $2618$ & $17$ & \ensuremath{4.27}\\
\benchmark{res} & 5-5 & 2-1-0 & $2618$ &  & $8577$ & $1$ & $1705$ & $6$ & \ensuremath{1.43}\\
\benchmark{res} & 15-15 & 2-0-1 & $2{\cdot} 10^{5}$ &  & $7{\cdot} 10^{5}$ & $1$ & $2{\cdot} 10^{5}$ & $4$ & \ensuremath{792}\\
\benchmark{res} & 15-15 & 2-1-0 & $2{\cdot} 10^{5}$ &  & $7{\cdot} 10^{5}$ & $1$ & $1{\cdot} 10^{5}$ & $8$ & \ensuremath{1061}\\
\benchmark{res} & 20-20 & 2-0-1 & $8{\cdot} 10^{5}$ &  & $2{\cdot} 10^{6}$ & $1$ & $8{\cdot} 10^{5}$ & $8$ & \ensuremath{641}\\
\benchmark{res} & 20-20 & 2-1-0 & $8{\cdot} 10^{5}$ &  & $2{\cdot} 10^{6}$ & $1$ & $4{\cdot} 10^{5}$ & $4$ & \ensuremath{101}\\
\midrule[1.5pt]
\benchmark{clu} & 8-3 & 1-1-0 & $2{\cdot} 10^{5}$ & $1{\cdot} 10^{5}$ & $4{\cdot} 10^{5}$ & $4$ & $2{\cdot} 10^{5}$ & $7$ & \ensuremath{163}\\
\benchmark{clu} & 16-4 & 1-1-0 & $2{\cdot} 10^{6}$ & $9{\cdot} 10^{5}$ & $4{\cdot} 10^{6}$ & $5$ & $2{\cdot} 10^{6}$ & $9$ & \ensuremath{3432}\\
\benchmark{clu} & 32-3 & 1-1-0 & $2{\cdot} 10^{6}$ & $1{\cdot} 10^{6}$ & $5{\cdot} 10^{6}$ & $4$ & $2{\cdot} 10^{6}$ & $7$ & \ensuremath{3328}\\
\midrule
\benchmark{dpm} & 3-3 & 1-0-1 & $5232$ & $1980$ & $6408$ & $46$ & $3045$ & $2$ & \ensuremath{11.2}\\
\benchmark{dpm} & 3-3 & 1-1-0 & $4584$ & $1656$ & $5562$ & $25$ & $2856$ & $4$ & \ensuremath{${<}\,$ 1}\\
\benchmark{dpm} & 4-4 & 1-0-1 & $7{\cdot} 10^{4}$ & $2{\cdot} 10^{4}$ & $8{\cdot} 10^{4}$ & $497$ & $4{\cdot} 10^{4}$ & $2$ & \ensuremath{214}\\
\benchmark{dpm} & 4-4 & 1-1-0 & $6{\cdot} 10^{4}$ & $2{\cdot} 10^{4}$ & $7{\cdot} 10^{4}$ & $301$ & $4{\cdot} 10^{4}$ & $4$ & \ensuremath{3.32}\\
\benchmark{dpm} & 5-5 & 1-0-1 & $1{\cdot} 10^{6}$ & $3{\cdot} 10^{5}$ & $1{\cdot} 10^{6}$ & $6476$ & $6{\cdot} 10^{5}$ &  & TO\\
\benchmark{dpm} & 5-5 & 1-1-0 & $1{\cdot} 10^{6}$ & $3{\cdot} 10^{5}$ & $1{\cdot} 10^{6}$ & $4321$ & $6{\cdot} 10^{5}$ & $4$ & \ensuremath{329}\\
\midrule
\benchmark{pol} & 3-3 & 1-1-0 & $1{\cdot} 10^{4}$ & $5309$ & $2{\cdot} 10^{4}$ & $1$ & $9522$ & $3$ & \ensuremath{1.37}\\
\benchmark{pol} & 4-3 & 1-1-0 & $6{\cdot} 10^{4}$ & $3{\cdot} 10^{4}$ & $1{\cdot} 10^{5}$ & $1$ & $5{\cdot} 10^{4}$ & $3$ & \ensuremath{2.52}\\
\benchmark{pol} & 4-4 & 1-1-0 & $9{\cdot} 10^{5}$ & $5{\cdot} 10^{5}$ & $2{\cdot} 10^{6}$ & $1$ & $8{\cdot} 10^{5}$ & $3$ & \ensuremath{237}\\
\midrule
\benchmark{rqs} & 2-2 & 1-1-0 & $2805$ & $1039$ & $4159$ & $1$ & $1618$ & $3$ & \ensuremath{${<}\,$ 1}\\
\benchmark{rqs} & 3-3 & 1-1-0 & $1{\cdot} 10^{5}$ & $6{\cdot} 10^{4}$ & $3{\cdot} 10^{5}$ & $1$ & $9{\cdot} 10^{4}$ & $3$ & \ensuremath{4.51}\\
\benchmark{rqs} & 5-3 & 1-1-0 & $3{\cdot} 10^{6}$ & $2{\cdot} 10^{6}$ & $7{\cdot} 10^{6}$ & $1$ & $2{\cdot} 10^{6}$ & $3$ & \ensuremath{182}\\
\bottomrule[1.5pt]
\end{tabular}

		%		}
	}
\end{table}

\paragraph{Related tools}
\multigain~\cite{BCFK15} is an extension of \prism~\cite{KNP11} that implements the LP-based approach of~\cite{BBCFK14} for multiple LRA objectives on MDP to answer qualitative and quantitative achievability as well as Pareto queries.
For the latter, it is briefly mentioned in~\cite{BCFK15} that ideas of~\cite{FKP12} were used similar to our approach but no further details are provided.
\multigain does not support MA, \emph{mixtures} with total reward objectives,
and Pareto queries with $\numobj > 2$ objectives.
However, it does support more general quantitative achievability queries.

\prismgames~\cite{KPW18,KNPS20} implements value iteration over convex sets~\cite{BKTW15,BKW18} to analyze multiple LRA reward objectives on stochastic games (SGs).
By converting MDPs to 1-player SGs, \prismgames could also be applied in our setting.
However, some experiments on 1-player SGs indicated that this approach is not competitive compared to the dedicated MDP implementations in \multigain and \storm.
We therefore do not consider \prismgames in our evaluation.

\paragraph{Benchmarks}
We consider 10 different case studies including the workstation cluster~(\benchmark{clu}) as well as benchmarks from QVBS~\cite{HKPQR19} (\benchmark{dpm}, \benchmark{rqs}, \benchmark{res}), from \multigain~\cite{BCFK15} (\benchmark{mut}, \benchmark{phi}, \benchmark{vir}), from~\cite{KM17} (\benchmark{csn}, \benchmark{sen}), and from~\cite{QJK17} (\benchmark{pol}).
For each case study we consider 3 concrete instances resulting in 12 MAs and 18 MDPs.
%Some models have been extended to analyze multiple LRA objectives.
The analyzed objectives range over LRA rewards, (goal-bounded) total rewards, and time-, step- and unbounded reachability probabilities.

\paragraph{Set-up}
We evaluated the performance of \storm and \multigain Version 1.0.2\footnote{Obtained from \url{http://qav.cs.ox.ac.uk/multigain} and invoked with  \tool{Gurobi}~\cite{gurobi}.}.
All experiments were run on 4 cores\footnote{\storm uses one core, \multigain uses multiple cores due to \tool{Java}'s garbage collection and \tool{Gurobi}'s parallel solving techniques.} of an Intel Xeon Platinum 8160 CPU with a time limit of 2 hours and 32 GB RAM.
For each experiment we measured the total runtime (including model building) to solve one query.
For qualitative and quantitative achievability we consider thresholds close to the Pareto front.
%For quantitative achievability there should be an intersection point with the Pareto front.
For Pareto queries, the approximation precision $10^{-4}$ was set to both tools.

\paragraph{Results}
%Our results are summarized in \Cref{fig:scatter} and \Cref{tab:paretolra,tab:paretoother}.
\Cref{fig:scatter} visualizes the runtime comparison with \multigain. A point $\tuple{x,y}$ in the plot corresponds to a query that has been solved by $\storm$ in $x$ seconds and by $\multigain$ in $y$ seconds.
Points on the solid diagonal mean that both tools were equally fast.
The two dotted lines indicate experiments where \storm only required $\frac{1}{10}$ resp. $\frac{1}{100}$ of the time of \multigain.
%The horizontal and vertical lines indicate instances where the time- or memory limit was exceeded.
TO and MO indicate a time- or memory out.
\Cref{tab:paretolra,tab:paretoother} provide further data for Pareto queries.
The columns indicate model name and parameters, the number of LRA reward, total reward, and bounded reachability objectives, the number of states ($|\states|$), Markovian states ($|\ms|$), successor distributions ($|\transitions|$), 0-ECs ($|\componentset|$), and states within 0-ECs ($|\states_\componentset|$) of the MA or MDP, the number of iterations (\#iters) of~\Cref{alg:refinementloop} performed by \storm, and the total runtime of \storm and \multigain in seconds.
Runtimes are omitted if the tool does not support the query.
MDP (MA) benchmarks are at the top (bottom) of each table.
\Cref{tab:paretolra} considers pure LRA queries, whereas \Cref{tab:paretoother} considers mixtures.

\paragraph{Discussion}
As indicated in~\Cref{fig:scatter}, our implementation outperforms \multigain on almost all benchmarks and for all types of queries and is often orders of magnitude faster.
According to \multigain's log files, the majority of its runtime is spend for solving LPs, suggesting that the better performance of \storm is likely due to the iterative approach presented in this work.

\Cref{tab:paretolra} shows that \emph{pure LRA queries on models with millions of states can be handled}.
There were no significant runtime gaps between MA and MDP models.
For \benchmark{csn}, the increased number of objectives drastically increases the overall runtime.
This is partly due to our naive implementation of the geometric set representations used in \Cref{alg:refinementloop}.
\Cref{tab:paretoother} indicates that the performance and scalability for mixtures of LRA and other types of objectives is similar.
One exception are queries involving time-bounded reachability on MA (\eg \benchmark{dpm}). 
Here, our implementation is based on the single-objective approach of~\cite{GHHKT14} that is known to be slower than more recent methods~\cite{BHHK15,BF19}.

\paragraph{Data availability} The implementation, models, and log files are available at~\cite{artifact}.
\section{Conclusion}
The analysis of multi-objective model checking queries involving multiple long-run average rewards can be incorporated into the framework of~\cite{FKP12} enabling (i) the use of off-the-shelf single-objective algorithms for LRA and (ii) the combination with other kinds of objectives such as total rewards.
Our experiments indicate that this approach clearly outperforms existing algorithms based on linear programming.
Future work includes lifting the approach to \emph{partially observable MDP} and \emph{stochastic games}, potentially using ideas of \cite{BJKQ20} and \cite{ACKWW20}, respectively.

\iftoggle{TR}{}{\newpage}
%\smallskip\noindent\textit{Acknowledgments.}

\bibliographystyle{splncs04}
\bibliography{main}

%\paragraph{Data availability.}
%The data generated in our experimental evaluation is archived and available at DOI %\href{TODO}{TODO}~\cite{PaperArtifact}.
\iftoggle{TR}{}{%%%%% To display Open Access text and logo, Please add below text and copy the cc_by_4-0.eps in the manuscript package %%%

\vfill

{\small\medskip\noindent{\bf Open Access} This chapter is licensed under the terms of the Creative Commons\break Attribution 4.0 International License (\url{https://creativecommons.org/licenses/by/4.0/}), which permits use, sharing, adaptation, distribution and reproduction in any medium or format, as long as you give appropriate credit to the original author(s) and the source, provide a link to the Creative Commons license and indicate if changes were made.}

{\small \spaceskip .28em plus .1em minus .1em The images or other third party material in this chapter are included in the chapter's Creative Commons license, unless indicated otherwise in a credit line to the material.~If material is not included in the chapter's Creative Commons license and your intended\break use is not permitted by statutory regulation or exceeds the permitted use, you will need to obtain permission directly from the copyright holder.}

\medskip\noindent\includegraphics{cc_by_4-0.eps}
}

\end{document}